\newtheorem{theorem}{Theorem}[section]
\newtheorem{corollary}[theorem]{Corollary}
\newtheorem{lemma}[theorem]{Lemma}
\newtheorem{definition}[theorem]{Definition}
\newtheorem{remark}[theorem]{Remark}
\def\Tr{\mathrm{Tr}}
\def\id{{\bf 1}\!\!{\rm I}}
\date{January 2023}
\begin{document}

\centerline{\large{\textbf{Bi- Entangled Hidden  Markov Processes and Recurrence}}}
\bigskip\bigskip
\centerline{\textbf{Soueidi El Gheteb}}
\centerline{Department of Mathematics and Informatics, Faculty of Sciences and Technologies,}
\centerline{University of Nouakchott, Nouakchott, Mauritania,}
\centerline{elkotobmedsalem@gmail.com, elkotobmedsalem@fst.e-una.mr}

\begin{abstract}
In this paper, we introduce the notion of Bi-entangled hidden Markov processes. These are  hidden quantum processes where the hidden processes themselves exhibit entangled Markov process, and the observable processes also exhibit entanglement. We present a specific formula for the joint expectation of these processes. Furthermore, we discuss the recurrence of the underlying quantum Markov processes associated to the Bi- entangled hidden Markov processes and we establish that, by restricting them within suitable commutative subalgebras (diagonal subalgebras) leads to the recovery of Markov processes defined by the hidden stochastic matrix. In this paper we
only deal with processes with an at most countable state space.

\end{abstract}

\section{Introduction}

Hidden Markov models (HMMs) are a  statistical model that was first proposed by Baum L.E. (Baum and Petrie, 1966) \cite{BaumPetr66}. They introduced the Baum-Welch algorithm, alternatively recognized as the Expectation-Maximization (EM) algorithm. This algorithm holds fundamental significance as it enables the acquisition of Hidden Markov Model parameters from observed data. Nowadays, HMMs are widely utilized in a multitude of practical scenarios due to their inherent flexibility. They have found application in various domains, including: speech recognition \cite{JelBahMer75}, \cite{RabLeSo83}, \cite{Rab89}, \cite{HuaYasMerv90},  bio-informatics \cite{Eddy98}, finance \cite{HasNat05}, \cite{Nguyet18}, genetics \cite{LiSte03}, machine learning \cite{GhahrJord97} and many other fields (see \cite{AccSegLuSs} for a wider discussion of the literature on HMMs). This demonstrates the broad range of fields where Hidden Markov Models are employed to address a wide array of practical challenges.

In \cite{AccSegLuSs}, we have broadened the scope of the traditional HMMs by introducing a more encompassing category called Hidden Processes (HPs). In the context of Hidden Processes, it is not obligatory for the underlying or hidden process to exhibit Markovian behavior. For example, it can encompass hidden Markov processes that do not necessarily conform to the Markov property themselves.

In the past few years, there have been developments in extending the definition of HMPs to include quantum aspects. You can find examples of these extensions in references such as \cite{WiesnCrutc08}, \cite{MonrWiesn11}, and \cite{SatGuru93}. However, it's worth noting that these extensions are limited in scope and primarily address specific elements of HMPs, such as dynamics or various statistical algorithms.

In \cite{AccSegLuSs}, we introduced the class of "hidden quantum processes" (HQP) along with its subset, "hidden quantum Markov processes" (HQMP). We established that this class of processes extends the concept of Quantum Markov Chains (QMC) in a manner similar to how classical hidden Markov processes generalize Classical Markov Chains. Furthermore, in \cite{SsSe}, we explored a subset of HQMMs known as "entangled Hidden Markov models" (EHMMs). These EHMMs are characterized by having an underlying process represented by an entangled Markov chain, which is the quantum extension of the notion of classical random walk as detailed in references (\cite{AcFi03-EMC}–\cite{SsSeBa}).

In the present paper, we  continue the analysis of nontrivial examples of
 hidden quantum Markov processes,  we introduce a new class of HQMMs, which we refer to as "Bi-entangled Hidden Markov Models"(Bi-EHMMs).  Both the hidden and observable processes exhibit entanglement, as indicated in references (\cite{AcFi03-EMC}–\cite{SsSeBa}). The bi-entanglement arises from utilizing an entangled transition expectation related to the underlying process, along with an entangled emission operator associated with the observable process. Our work involves establishing a structural theorem for Bi-EHMMs. In addition, we suggest a definition and elucidate criteria to determine the recurrence of the underlying process represented by an entangled Markov chain and the diagonal restriction of the underlying process is discussed.

Let us briefly mention the organization of the paper. After preliminary information (see Sect. 2), in Section 3, we introduce Bi-entangled hidden Markov models and state the structure theorem. The recurrence for the underlying quantum process is the subject of Sections 4 and 5, where we provide an explicit definition and provide a recurrence criteria. Section 6 focuses on the analysis of restricting the underlying quantum process to its diagonal elements.

\section{Preliminaries  on Hidden quantum Markov processes }


Let $H$ be a separable hilbert space and $\mathcal{B}_{H}:=\mathcal{B}(H)$ the algebra of bounded linear operator on $H$. We restrict ourselves to the case where $dim(H)=d<\infty$ and we will use the notation
$$
D := \{1,\dots, d\}
$$
To each ortho--normal basis (o.n.b.) $e\equiv(e_{h})_{h\in D}$ of $H$, one can associate a system of
matrix units $(e_{h,k})$ where, for all $h, k\in D$,
\begin{equation}\label{notat-ee^*}
e_{hk}(\xi) :=
e_{k}e_{h}^*(\xi) := \langle e_{h}, \xi\rangle e_{k}  \quad,\quad\forall \xi\in H
\end{equation}
and its matrix, in the $e$--basis, has all entries equal to zero with the exception of the $(h,k)$--th element which is 1. The assignement of the system of matrix units $(e_{hk})_{h,k\in D}$ allows to identify $\mathcal{B}_H$ with the algebra $\mathcal{M}_d (\mathbb{C})$ of all $d\times d$ complex matrices.\\
Let $d_H$ and $d_O$ be two  positive integers and let
$$
D_H: =\{1,\cdots, d_H\}\quad \hbox{and} \quad D_O =\{1,\cdots, d_O\}
$$
Define the hidden sample algebra by
\begin{equation}\label{sample_alg}
\mathcal{A}_{H}= \bigotimes_{\mathbb{N}}\mathcal{M}_{d_H}
\end{equation}

 and the observable sample algebra by
\begin{equation}\label{obser_alg}
\mathcal{A}_{O}=\bigotimes_{\mathbb{N}} \mathcal{M}_{d_O}
\end{equation}

and the $(H,O)$ full sample algebra by
\begin{equation}\label{(H,O)_sample_alg}
\mathcal{A}_{H,O}:=\bigotimes_{\mathbb{N}} (\mathcal{M}_{d_H} \otimes \mathcal{M}_{d_O})
\end{equation}

The corresponding tensor embedding ($n\in \mathbb{N}$)
\begin{equation}\label{df-H-embeds-cl}
\mathcal{A}_{H,n}:=j_{H_{n}}(\mathcal{M}_{d_H}) = \id_{d_H}\otimes \id_{d_H}\otimes\cdots \otimes \id_{d_H}\otimes \mathcal{M}_{d_H}\otimes \id_{d_H}\cdots
\end{equation}
\begin{equation}\label{df-O-embeds-cl}
\mathcal{A}_{O,n}:= j_{O_{n}}(\mathcal{M}_{d_O}) = \id_{d_O}\otimes \id_{d_O}\otimes\cdots \otimes \id_{d_O}\otimes \mathcal{M}_{d_O}\otimes \id_{d_O}\cdots
\end{equation}

The $(H,O)$-embedding is defined by
\begin{equation}\label{df-H-embeds-cl-alg}
\mathcal{A}_{H,n}\otimes \mathcal{A}_{O,n} :=j_{H_{n}}(\mathcal{M}_{d_H})\otimes j_{O_{n}}(\mathcal{M}_{d_O})
\subset \mathcal{A}_{H,O}
\end{equation}
Denote the finite volume $(H,O)$ sample algebra by
$$
\mathcal{A}_{H, O; k} = j_{H_k}(\mathcal{M}_{d_H}\otimes \mathcal{M}_{d_O});\quad \mathcal{A}_{H,O, [0,n]}:=\bigotimes_{k\in[0,n] }\mathcal{A}_{H, O, k}
$$
-- The backward (resp. forward) H-filtration is given by
$$
 \mathcal{A}_{H, [0,n]}:=\bigotimes_{k\in[0,n] }\mathcal{A}_{H, k}\quad ; \quad (\hbox{resp.}\, \mathcal{A}_{H, [n}:=\bigotimes_{k\ge n}\mathcal{A}_{H, k} )
$$
-- The finite volume O-sample algebra
$$
  \mathcal{A}_{O, [0,n]}:=\bigotimes_{k\in[0,n] }\mathcal{A}_{O, k}
$$
\begin{definition}
A linear map $\mathcal{E}_H$ from $\mathcal{M}_{d_H}\otimes \mathcal{M}_{d_H}$ into $\mathcal{M}_{d_H}$ is called \textit{transition expectation} if it is completely positive and  identity preserving.
\end{definition}
\begin{definition}
A linear map $\mathcal{E}_{H,O}$ from $\mathcal{M}_{d_H}\otimes \mathcal{M}_{d_O}$ into $\mathcal{M}_{d_H}$ is called \textit{emission operator} if it is completely positive and  identity preserving.
\end{definition}

\begin{definition}\label{def-QHMP}
A state $\varphi_{H,O}$ over $\mathcal{A}_{H,O}$ is a (homogeneous)  hidden quantum Markov processes (HQMP for
short) with:
\begin{enumerate}
  \item initial state $\varphi_{H,0}$ on $\mathcal{M}_{d_H}$;
  \item a transition expectation $\mathcal{E}_H:$ $\mathcal{M}_{d_H}\otimes \mathcal{M}_{d_H} \rightarrow \mathcal{M}_{d_H}$;
   \item \textit{an emission operator}
$\mathcal{E}_{H,O}:\, \mathcal{M}_{d_H}\otimes \mathcal{M}_{d_O} \rightarrow \mathcal{M}_{d_H}$;
\end{enumerate}
and
\begin{equation}\label{joint-exp-qHMP}
\varphi_{H,O}(\prod_{m=0}^{n}j_{H_m}(a_m)j_{O_m}(b_m))
\end{equation}
$$
= \varphi_{H_0}\left(\mathcal{E}_{H}\left(\mathcal{E}_{O,H} (a_{0}\otimes b_{0})\otimes\mathcal{E}_{H} (\mathcal{E}_{O,H} \left((a_{1}\otimes b_{1})\otimes \cdots \right.\right.\right. \notag\\
$$
$$
\left.\left.\left.\otimes \mathcal{E}_{H}\left( \mathcal{E}_{O,H}(a_{n-1}\otimes b_{n-1})\otimes
\mathcal{E}_{H}\left(\mathcal{E}_{O,H}\left(a_{n}\otimes b_{n}\right)\otimes 1_{H_{n+1}}
\right)\right)\cdots\right)\right)\right)\notag
$$
for all $n\in\mathbb{N}$,  for all $a_m\in\mathcal{M}_{d_H},b_m\in \mathcal{M}_{d_O}, \,m\in \{0, \dots, n\}$.\\
Then the triplet $(\varphi_{H_0}, \mathcal{E}_H, \mathcal{E}_{H,O})$ will be refereed as \textit{Hidden Quantum Markov Model} and the state given by (\ref{joint-exp-qHMP}) represents its joint expectation.

\end{definition}
In the above definition the pair  $(\varphi_{H_0},  \mathcal{E}_H)$ define a quantum Markov chain $\varphi_H$ on $\mathcal{A}_H$, whose joint expectation is given by
\begin{equation}\label{phiH}
  \varphi_H(\prod_{m=0}^{n}j_{H_m}(a_m)) = \varphi_{H_0}(\mathcal{E}_{H}(a_0\otimes\mathcal{E}_{H}(a_2\cdots \mathcal{E}_{H}(a_n\otimes \id_{H_{n+1}})\cdots )))
\end{equation}

\begin{remark}
By restricting $\varphi_{H,O}$ to an abelian sub-algebra of the $\mathcal{A}_{H,O}$ algebra, we obtain classical hidden Markov processes \cite{AccSegLuSs}.

\end{remark}
\begin{remark}
The \textit{emission operator} $\mathcal{E}_{H,O}$ characterizes the probabilities conditioned on the hidden process for the observable process. This hidden process is defined by its own transition expectation denoted as $\mathcal{E}_{H}$.
\end{remark}

\section{Bi- Entangled Hidden Markov Models}

\noindent In what follows, we make the assumption that  all the algebras $\mathcal{A}_{H_{n}}$ and $\mathcal{A}_{O_{n}}$ are taken to be
isomorphic to a single algebra $\mathcal{B}$, \textbf{independent of} $n$
and isomorphic to  the C$^{*}$-algebra of $d\times d$-matrices $\mathcal{M}_d$, its identity will be denoted by $\id_d$, for some $d\in \mathbb{N}$:
\begin{equation}\label{hid-obs-q-algs-equal}
\mathcal{A}_{H_{n}} \equiv \mathcal{A}_{O_{n}} \equiv \mathcal{B}:=\mathcal{M}_d
\end{equation}
where $\equiv$ denotes $*$--isomorphism.
Thus, for the sample algebra of the underlying Markov process, one has the identification
$$
\mathcal{A}_{H} := \bigotimes_{\mathbb{N}}\mathcal{B}
$$
with the tensor embeddings
\begin{equation}\label{df-jHn-calA-Hn}
j_{H_{n}} \colon b\in \mathcal{B} \to j_{H_{n}}(b) \equiv b\otimes 1_{\{n\}^c}\in\mathcal{A}_{H}
\quad;\quad \mathcal{A}_{H_{n}} := j_{H_{n}}(\mathcal{B})
\end{equation}
where $1_{\{n\}^c}$ denotes the identity in $\bigotimes_{\mathbb{N}\setminus\{n\}}\mathcal{B}$.\\
Similarly we define the observable algebra
$$
\mathcal{A}_{O} \equiv \bigotimes_{\mathbb{N}}\mathcal{B}
$$
and the tensor embeddings
\begin{equation}\label{df-jOn-calA-On}
j_{O_{n}} \colon b\in \mathcal{B} \to j_{O_{n}}(b) \equiv b\otimes 1_{\{n\}^c}\in\mathcal{A}_{O}
\quad;\quad \mathcal{A}_{O_{n}} := j_{O_{n}}(\mathcal{B})
\end{equation}
where again $1_{\{n\}^c}$ denotes the identity in
$\bigotimes_{\mathbb{N}\setminus\{n\}}\mathcal{B}$.\\
The algebra of the $(H,O)$--process is then
\begin{equation}\label{identif-calA(H,O)}
\mathcal{A}_{H,O} := \mathcal{A}_{H_{n}} \otimes \mathcal{A}_{O_{n}}
\equiv\bigotimes_{\mathbb{N}} \ (\mathcal{A}_{H} \otimes \mathcal{A}_{O})
\equiv\bigotimes_{\mathbb{N}}\mathcal{B}\otimes\bigotimes_{\mathbb{N}}\mathcal{B}
\end{equation}

Consider hidden Markov model $\lambda =(\pi, \Pi, Q)$, where
$\pi =(\pi_j)_{j\in D}$ the initial distribution of the hidden process, $\Pi=(\Pi_{ij})_{i,j\in D}
$ is the hidden stochastic matrix and  $Q=(q_{j}(k))_{\substack{j,k\in D}}
$ is the emission stochastic  matrix.\\
\begin{definition}
The entangled hidden Markov operator,denoted as $P_H$, and the entangled emission Markov operator, denoted as $P_{H,O}$, are defined in the following manner, and they respectively relate to the stochastic matrix $\Pi$ and the stochastic matrix $Q$, along with the canonical systems of matrix units  $\{e_{i,j}\}_{i,j \in D}$

\begin{equation}\label{Ph}
  P_H(A) = \sum_{i,j,k,l\in D}\sqrt{\Pi_{ik}\Pi_{jl}}\ a_{kl}e_{ij},\quad \forall A = (a_{kl})\in \mathcal{M}_{d}
\end{equation}
\begin{equation}\label{Ph_O}
  P_{H,O}(B) = \sum_{i,j,k,l\in D}\sqrt{Q_{ik}Q_{jl}}\ b_{kl}e_{ij}, \quad \forall B = (b_{kl})\in \mathcal{M}_{d}
\end{equation}
\end{definition}

\begin{remark}
One can immediately check that  $P_H$ and $P_{H,O}$ do not preserve identity. As a consequence, they exhibit entanglement according to the criteria defined in \cite{AcFi03-EMC}.
\end{remark}
Let $\mathcal{E}_{H}: \mathcal{M}_{d}\otimes \mathcal{M}_{d}\rightarrow \mathcal{M}_{d}$ be defined  as the linear extension of
\begin{equation}\label{E_H}
\mathcal{E}_{H}(a\otimes b) := a\diamond P_{H}(b), \quad a, b  \in \mathcal{M}_{d}
\end{equation}
and let
$\mathcal{E}_{H,O}: \mathcal{M}_{d}\otimes \mathcal{M}_{d}\rightarrow \mathcal{M}_{d}$
be defined as the linear extension of
\begin{equation}\label{E_H_O}
\mathcal{E}_{H,O}(a\otimes b) := a\diamond P_{H,O}(b), \quad a,b \in \mathcal{M}_{d}
\end{equation}
where $\diamond$ represent the \textbf{Schur} product (see \cite{AcFi03-EMC}).
\begin{lemma}
 $\mathcal{E}_{H}$ \eqref{E_H} and $\mathcal{E}_{H,O}$ \eqref{E_H_O} are completely positive and identity-preserving maps.\\
Moreover, for $a=(a_{ij})_{i,j\in D}$ and $b=(b_{kl})_{k,l\in D}$, we have
$$
\mathcal{E}_{H}(a\otimes b)=\sum_{i,j,k,l\in D}\sqrt{\Pi_{ik}\Pi_{jl}}a_{ij}b_{kl}e_{ij};\quad \mathcal{E}_{H,O}(a\otimes b)=\sum_{i,j,k,l\in D}\sqrt{Q_{ik}Q_{jl}}a_{ij}b_{kl}e_{ij}
$$
\end{lemma}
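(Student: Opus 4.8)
The plan is to prove the two assertions separately: first the explicit Schur-product formula for both maps, and then complete positivity and identity preservation. I would begin with the formula, since once it is established the remaining properties can be read off or verified directly.

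First I would compute $\mathcal{E}_H(a\otimes b)$ from its definition $\mathcal{E}_H(a\otimes b)=a\diamond P_H(b)$. Applying the definition \eqref{Ph} of $P_H$ to $b=(b_{kl})$ gives $P_H(b)=\sum_{i,j,k,l\in D}\sqrt{\Pi_{ik}\Pi_{jl}}\,b_{kl}\,e_{ij}$, whose $(i,j)$--entry is $\sum_{k,l\in D}\sqrt{\Pi_{ik}\Pi_{jl}}\,b_{kl}$. The Schur (entrywise) product $a\diamond P_H(b)$ multiplies this entrywise by $a_{ij}$, so the $(i,j)$--entry of $\mathcal{E}_H(a\otimes b)$ is $a_{ij}\sum_{k,l}\sqrt{\Pi_{ik}\Pi_{jl}}\,b_{kl}$. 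Reassembling in the matrix-unit basis yields exactly $\mathcal{E}_H(a\otimes b)=\sum_{i,j,k,l\in D}\sqrt{\Pi_{ik}\Pi_{jl}}\,a_{ij}b_{kl}\,e_{ij}$, and the identical computation with $Q$ in place of $\Pi$ gives the formula for $\mathcal{E}_{H,O}$. Since both maps are defined as linear extensions of their action on simple tensors $a\otimes b$, this computation determines them completely.

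For identity preservation I would evaluate the formula at $a=b=\id_d$, i.e. $a_{ij}=\delta_{ij}$ and $b_{kl}=\delta_{kl}$. The double Kronecker delta collapses the sum to $\sum_{i,k}\Pi_{ik}\,e_{ii}$, and using that $\Pi$ is stochastic (so $\sum_k\Pi_{ik}=1$) this equals $\sum_i e_{ii}=\id_d$; the same holds for $\mathcal{E}_{H,O}$ because $Q$ is also stochastic. For complete positivity, the cleanest route is to exhibit a Kraus (Stinespring) representation. I would look for operators $V_{ij}$ such that $\mathcal{E}_H(X)=\sum V_{ij}^*\,X\,V_{ij}$ for $X\in\mathcal{M}_d\otimes\mathcal{M}_d$; a natural candidate, motivated by the square roots $\sqrt{\Pi_{ik}}$ appearing in $P_H$, is a family built from $\sum_{i,k}\sqrt{\Pi_{ik}}\,(e_{i}\otimes e_{k})e_{?}^*$, and one then checks on elementary tensors $a\otimes b$ that this reproduces the derived formula. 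Alternatively, since complete positivity of the entangled Markov operators of \cite{AcFi03-EMC} is already known, I can invoke that the Schur product with a positive-semidefinite correlation matrix is completely positive and that $\mathcal{E}_H$ is the composition $a\otimes b\mapsto a\diamond P_H(b)$ of completely positive operations.

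I expect the main obstacle to be complete positivity rather than the algebra. The identity-preservation and the explicit formula are essentially bookkeeping with Kronecker deltas and the stochasticity of $\Pi,Q$, but complete positivity requires either constructing an explicit Kraus decomposition and verifying it on a spanning set, or correctly identifying which known completely positive building blocks compose to give $\mathcal{E}_H$. The subtlety is that the Schur product by itself is completely positive only against a fixed positive matrix, whereas here one argument is the variable matrix $a$; so I would need to check that the relevant ``correlation'' matrix $(\sqrt{\Pi_{ik}\Pi_{jl}})$ indexed by $(i,j)$ and $(k,l)$ is positive semidefinite — which it is, being of rank-one Gram form $\sqrt{\Pi_{ik}}\sqrt{\Pi_{jl}}$ — and then assemble this into a genuine complete-positivity argument for the combined map.
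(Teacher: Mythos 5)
The paper states this lemma without any proof, so there is nothing to compare your argument against; judged on its own, your derivation of the explicit formula (read off the $(i,j)$ entry of $P_H(b)$, Schur-multiply by $a_{ij}$, reassemble in the $e_{ij}$ basis) and your verification of identity preservation (the double Kronecker delta collapses the sum to $\sum_{i,k}\Pi_{ik}e_{ii}=\id_d$ by row-stochasticity of $\Pi$, and likewise for $Q$) are correct and complete. Note that unitality of $\mathcal{E}_H$ is \emph{not} inherited from $P_H$ — indeed $P_H(\id_d)=\sum_{i,j,k}\sqrt{\Pi_{ik}\Pi_{jk}}\,e_{ij}\neq\id_d$, which is the whole point of the entanglement — so your route through the explicit formula is the right one.

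The only unfinished step is complete positivity, where you name the correct candidate Kraus operator but leave a ``$e_{?}^*$'' and a ``one then checks.'' To close it: take the single operator $V:=\sum_{i,k\in D}\sqrt{\Pi_{ik}}\,(e_i\otimes e_k)\,e_i^*\colon\mathbb{C}^d\to\mathbb{C}^d\otimes\mathbb{C}^d$, so that $Ve_j=\sum_{l}\sqrt{\Pi_{jl}}\,e_j\otimes e_l$ and hence
\begin{equation*}
\langle e_i,\,V^*(a\otimes b)V\,e_j\rangle=\sum_{k,l\in D}\sqrt{\Pi_{ik}\Pi_{jl}}\,a_{ij}b_{kl},
\end{equation*}
which is exactly your formula; thus $\mathcal{E}_H(x)=V^*xV$ on all of $\mathcal{M}_d\otimes\mathcal{M}_d$ by linearity, which is manifestly completely positive, and $V^*V=\sum_i\bigl(\sum_k\Pi_{ik}\bigr)e_ie_i^*=\id_d$ gives unitality again for free. (Equivalently, $\mathcal{E}_H=D\circ(\mathrm{id}\otimes P_H)$ where $D(a\otimes c)=a\diamond c$ is implemented by the isometry $W=\sum_i(e_i\otimes e_i)e_i^*$ and $P_H(A)=\sqrt{\Pi}A\sqrt{\Pi}^*$ is completely positive with one Kraus operator; composing recovers the same $V$.) Your alternative route via the Schur-multiplier theorem also works, but as you stated it the composition is muddled: the completely positive pieces are the Schur multiplier by the rank-one Gram matrix $M_{(i,k),(j,l)}=\sqrt{\Pi_{ik}}\sqrt{\Pi_{jl}}$ on $\mathcal{M}_{d^2}$ followed by the positive contraction $z\mapsto(\mathrm{id}\otimes\omega)(z)$ with $\omega(b)=\sum_{k,l}b_{kl}=\langle u,bu\rangle$, $u=\sum_k e_k$ — not ``$a\diamond(\cdot)$ for variable $a$,'' which is not a map to which Schur's theorem applies. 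Either completion turns your sketch into a full proof; the same argument verbatim with $Q$ in place of $\Pi$ handles $\mathcal{E}_{H,O}$.
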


\begin{definition}
The mappings $\mathcal{E}_{H}$ (given by equation \eqref{E_H}) and $\mathcal{E}_{H,O}$ (given by equation \eqref{E_H_O}) are referred to as the entangled hidden transition expectation and the entangled emission operator, respectively.
\end{definition}
\begin{definition}
An HQMM $(\varphi_{H,0}, \mathcal{E}_{H}, \mathcal{E}_{H,O})$ is called \textit{Bi-entangled hidden Markov model}  if its hidden transition expectation and emission operator take the forms \eqref{E_H} and \eqref{E_H_O}, respectively.
\end{definition}
\begin{remark}
In the given definition, the entangled hidden Markov models \cite{SsSeg} can be identified when $\mathcal{E}_{H,O}$ is taken in the following form:
\begin{equation}\label{emission_operator_CDA}
\mathcal{E}_{H,O}(x):=\overline{\hbox{Tr}}_{2}(K_{H,O}^*x K_{H,O}), \quad x\in \mathcal{M}_{d}\otimes \mathcal{M}_{d}
\end{equation}
here $K_{H,O}\in \mathcal{M}_{d}\otimes \mathcal{M}_{d}$ is a conditional density amplitude. $\overline{\Tr}_2$ is the partial trace from $\mathcal{M}_{d}\otimes \mathcal{M}_{d}$ into $\mathcal{M}_{d}$ defined by linear extension of $\overline{\Tr}_{2}(a\otimes b) = a \Tr(b)$.

\end{remark}

\begin{lemma}\label{lem_jointH} Let  $n\in \mathbb{N}$. For $\mathcal{E}_H$ given by (\ref{E_H}),  we have that
\begin{equation}\label{jnt-exp}
\mathcal{E}_H(a_n\otimes (\mathcal{E}_H(a_{n+1}\otimes\cdots\otimes (\mathcal{E}_H(a_{n+r} \otimes \id_{d})))))=\sum_{\substack{k_n,\cdots, k_{n+r-1}\\l_n,\cdots, l_{n +r-1}}}\Big(\prod_{m =n}^{n+r-1}\sqrt{\Pi_{k_{m}k_{m+1}}\Pi_{l_{m}l_{m+1}}} \, a^{(m)}_{k_{m}l_{m}}\Big)
\end{equation}
$$
\times\Big(a^{(n+r)}_{k_{n+r}l_{n+r}}\sum_{j}\sqrt{\Pi_{k_{n+r}j}\Pi_{l_{n+r}j}}\Big)e_{k_nl_n}
$$

for each $r\ge 0$ and $a_k =(a^{(m)}_{k_ml_m})_{k_m,l_m\in D}\in \mathcal{M}_{d},\, n\le m \le n + r$.

\end{lemma}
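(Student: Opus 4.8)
The plan is to prove the formula \eqref{jnt-exp} by induction on $r$, using the explicit action of $\mathcal{E}_H$ recorded in the preceding Lemma, namely
$$
\mathcal{E}_{H}(a\otimes b)=\sum_{i,j,k,l\in D}\sqrt{\Pi_{ik}\Pi_{jl}}\,a_{ij}b_{kl}\,e_{ij},
\qquad a=(a_{ij}),\ b=(b_{kl}).
$$
The base case $r=0$ reads $\mathcal{E}_H(a_n\otimes\id_d)$, and since the matrix entries of $\id_d$ are $\delta_{kl}$, the defining formula collapses the $k,l$ sum to a single index $j$, giving $\sum_{k_n,l_n}a^{(n)}_{k_nl_n}\bigl(\sum_j\sqrt{\Pi_{k_nj}\Pi_{l_nj}}\bigr)e_{k_nl_n}$, which is exactly the claimed right-hand side when the product $\prod_{m=n}^{n+r-1}$ is empty. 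I would verify this first to fix the normalization of the trailing factor $\sum_j\sqrt{\Pi_{k_{n+r}j}\Pi_{l_{n+r}j}}$.

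For the inductive step, assume the formula holds for $r-1$ applied to the block starting at $n+1$, so that the inner expression
$$
b:=\mathcal{E}_H\bigl(a_{n+1}\otimes(\cdots\otimes\mathcal{E}_H(a_{n+r}\otimes\id_d))\bigr)
$$
is a matrix whose $(k,l)$-entry, read off from the induction hypothesis, is the coefficient of $e_{kl}$; that is, $b_{kl}$ equals the sum over $k_{n+1},\dots,k_{n+r-1}$ and $l_{n+1},\dots,l_{n+r-1}$ of the corresponding product, with $k_{n+1}=k$ and $l_{n+1}=l$ pinned as the outermost indices. I would then apply $\mathcal{E}_H(a_n\otimes b)$ using the Lemma, producing $\sum_{k_n,l_n,k,l}\sqrt{\Pi_{k_nk}\Pi_{l_nl}}\,a^{(n)}_{k_nl_n}\,b_{kl}\,e_{k_nl_n}$, and substitute the induction hypothesis for $b_{kl}$. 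Relabelling $k\to k_{n+1}$, $l\to l_{n+1}$ merges the new factor $\sqrt{\Pi_{k_nk_{n+1}}\Pi_{l_nl_{n+1}}}\,a^{(n)}_{k_nl_n}$ into the product $\prod_{m=n+1}^{n+r-1}$, extending it to $\prod_{m=n}^{n+r-1}$ and reproducing the stated expression with outer matrix unit $e_{k_nl_n}$.

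The only genuinely delicate point is bookkeeping the index ranges so that the telescoping of the two square-root products is clean: the factors $\sqrt{\Pi_{k_mk_{m+1}}\Pi_{l_ml_{m+1}}}$ must chain consecutively while the emission indices $a^{(m)}_{k_ml_m}$ stay attached to the correct level, and the terminal factor $\sum_j\sqrt{\Pi_{k_{n+r}j}\Pi_{l_{n+r}j}}$ (coming from the innermost $\id_d$) must remain untouched through every step of the recursion. I expect this index-tracking to be the main obstacle, though it is purely notational rather than conceptual; one clean way to manage it is to carry the induction hypothesis in the entrywise form $b_{kl}=\langle e_k,\,b\,e_l\rangle$ so that substituting into $\mathcal{E}_H$ via the Lemma is a direct symbolic replacement. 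No completeness/positivity considerations enter here, since the identity is purely algebraic once the action of $\mathcal{E}_H$ on elementary tensors is known.
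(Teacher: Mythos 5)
Your proof is correct and follows essentially the same route as the paper: both compute $\mathcal{E}_H(a_{n+r}\otimes\id_d)$ explicitly (using that the entries of $\id_d$ collapse the inner sum to a single index $j$, producing the trailing factor $\sum_j\sqrt{\Pi_{k_{n+r}j}\Pi_{l_{n+r}j}}$), then apply the entrywise formula for $\mathcal{E}_H$ one more level out and conclude by iteration, which you merely phrase more formally as an induction on $r$. No substantive difference.
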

\begin{proof} From (\ref{Ph}) we have
  $$P_{H}(\id_{d})= \sum_{k,l,i,j}\sqrt{\Pi_{ki}\Pi_{lj}}\delta_{i,j}e_{kl} =  \sum_{k,l,j}\sqrt{\Pi_{kj}\Pi_{lj}}e_{kl}$$
Then
$$
\mathcal{E}_H(a_{n+r}\otimes \id_{d}) = a_{n+r}\diamond P_H(\id_{d}) =  \sum_{k_{n+r},l_{n+r},j}\sqrt{\Pi_{k_{n+r}j}\Pi_{l_{n+r}j}}a^{(n+r)}_{k_{n+r}l_{n+r}}e_{k_{n+r}l_{n+r}}
$$
It follows that
\begin{eqnarray*}
 && \mathcal{E}_{H}(a_{n+r-1}\otimes \mathcal{E}_H(a_{n+r}\otimes \id_{d}))\\
   &=& a_{n+r-1}\diamond  P_{H}\Big(\mathcal{E}_H(a_{n+r}\otimes \id_{d})\Big)\\
    &=&  a_{n+r-1}\diamond  \sum_{\substack{k_{n+r-1},k_{n+r}, j\\l_{n+r-1}, l_{n+r}}} \sqrt{\Pi_{k_{n+r-1}k_{n+r}}\Pi_{l_{n+r-1}l_{n+r}}}\sqrt{\Pi_{k_{n+r}j}\Pi_{l_{n+r}j}}a^{(n+r)}_{k_{n+r}l_{n+r}}
    e_{k_{n+r-1}l_{n+r-1}}\\
 &=&  \sum_{\substack{k_{n+r-1},k_{n+r}, j\\l_{n+r-1}, l_{n+r}}} \sqrt{\Pi_{k_{n+r-1}k_{n+r}}\Pi_{l_{n+r-1}l_{n+r}}}\sqrt{\Pi_{k_{n+r}j}\Pi_{l_{n+r}j}}a^{(n+r-1)}_{k_{n+r-1}l_{n+r-1}}a^{(n+r)}_{k_{n+r}l_{n+r}}
    e_{k_{n+r-1}l_{n+r-1}}\\
\end{eqnarray*}
Then, the formula \eqref{jnt-exp} follows by iteration.
\end{proof}

\begin{theorem}\label{th:struct-QHMP}{\rm
In the notation above, let $\varphi_{H,O}\equiv (\varphi_{H,0}, \mathcal{E}_H,\mathcal{E}_{H,O})$ be a  Bi- entangled hidden Markov chain  with emission operator $\mathcal{E}_{H,O}$  given by \eqref{E_H_O}, a hidden transition expectation given by \eqref{E_H} and an  initial state $\varphi_{H,0}$ on $\mathcal{M}_{d}$,
then the joint expectations of the processes are given by

\begin{equation}\label{joint-exp-qHMP_entangled_1}
\varphi_{H,O}\Big(\bigotimes_{m=0}^{n}j_{H_{m}}(a_{m})\otimes j_{O_{m}}(b_{m})\Big)
\end{equation}

$$
=\varphi_{H_0}(a_0\diamond P_{H,O}(b_0)\diamond P_{H}(a_1\diamond P_{H,O}(b_1)\diamond P_{H}(a_2\diamond\cdots \diamond P_{H}(a_{n-1}\diamond P_{H,O}(b_{n-1})\diamond P_{H}(a_n\diamond P_{H,O}(b_n)\diamond P_{H}(\id_d)))\cdots )))
$$
Furthermore, the joint expectations of the processes can be expressed as follows
\begin{equation}\label{joint-exp-qHMP_entangled_2}
\varphi_{H,O}\Big(\bigotimes_{m=0}^{n}j_{H_{m}}(a_{m})\otimes j_{O_{m}}(b_{m})\Big)
\end{equation}
$$
=\sum_{\substack{i,j,h_{n+1}\\
o_{1},o_{1}',\cdots,o_{n+1},o_{n+1}'\\
l_1,k_1,\cdots,l_n,k_n}}a_{0,ij}b_{0,o_1,o_{1}'}a_{1,k_1l_1}b_{1,o_2,o_{2}'}\cdots a_{n-1,k_{n-1}l_{n-1}}b_{n-1,o_n,o_{n}'}a_{n,k_nl_n}b_{n-1,o_{n+1},o_{n+1}'}
$$
$$
\times\left( \sqrt{Q_{io_1}Q_{jo_1'}}\sqrt{\Pi_{ik_1}\Pi_{jl_1}} \sqrt{\prod_{m=1}^{n}Q_{k_mo_{m+1}}Q_{l_mo_{m+1}'}}\right) \left(\sqrt{\prod_{m=1}^{n-1}\Pi_{k_mk_{m+1}}\Pi_{l_ml_{m+1}}}\right)\sqrt{\Pi_{k_nh_{n+1}}\Pi_{l_nh_{n+1}}}\varphi_{H,0}(e_{ij})
$$

for all $n\in\mathbb{N}$,  for all $a_{m}= (a^{(m)}_{ij}), b_{m}=(b^{(m)}_{ij})\in\mathcal{M}_{d}, \,m\in \{0, \dots, n\}$.
}\end{theorem}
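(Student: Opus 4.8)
The plan is to derive the first identity \eqref{joint-exp-qHMP_entangled_1} directly from the general structure formula \eqref{joint-exp-qHMP} of Definition \ref{def-QHMP}, and then to unfold the second identity \eqref{joint-exp-qHMP_entangled_2} by expanding the first one into matrix elements.

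First I would substitute the explicit forms \eqref{E_H} and \eqref{E_H_O} of the transition expectation and the emission operator into \eqref{joint-exp-qHMP}. Since $\mathcal{E}_{H,O}(a_m\otimes b_m)=a_m\diamond P_{H,O}(b_m)$ and $\mathcal{E}_{H}(x\otimes X)=x\diamond P_{H}(X)$, each nesting level of \eqref{joint-exp-qHMP} turns the block $\mathcal{E}_H(\mathcal{E}_{H,O}(a_m\otimes b_m)\otimes X)$ into $a_m\diamond P_{H,O}(b_m)\diamond P_{H}(X)$, where $X$ is the operator produced by the inner levels. Reading the nesting from the inside out, the innermost level contributes $a_n\diamond P_{H,O}(b_n)\diamond P_{H}(\id_d)$ (using $1_{H_{n+1}}=\id_d$), and an induction on the number of levels then assembles exactly the Schur--product tower displayed in \eqref{joint-exp-qHMP_entangled_1}.

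For the second identity I would pass to matrix elements using three elementary rules: the Schur product acts entrywise, $[a\diamond b]_{ij}=a_{ij}b_{ij}$; from \eqref{Ph} one has $[P_H(X)]_{ij}=\sum_{k,l}\sqrt{\Pi_{ik}\Pi_{jl}}\,X_{kl}$; and from \eqref{Ph_O} one has $[P_{H,O}(B)]_{ij}=\sum_{o,o'}\sqrt{Q_{io}Q_{jo'}}\,B_{oo'}$. In particular $[P_H(\id_d)]_{ij}=\sum_{h}\sqrt{\Pi_{ih}\Pi_{jh}}$, which is precisely the innermost factor computed in the proof of Lemma \ref{lem_jointH}. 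I would then set $Z_{n}:=a_n\diamond P_{H,O}(b_n)\diamond P_{H}(\id_d)$ and, for $0\le m<n$, $Z_m:=a_m\diamond P_{H,O}(b_m)\diamond P_H(Z_{m+1})$, so that \eqref{joint-exp-qHMP_entangled_1} reads $\varphi_{H,0}(Z_0)$. Applying the three rules to the recursion gives, at each level $m$, the entrywise identity
$$
[Z_m]_{k_ml_m}=a_{m,k_ml_m}\Big(\sum_{o_{m+1},o_{m+1}'}\sqrt{Q_{k_mo_{m+1}}Q_{l_mo_{m+1}'}}\,b_{m,o_{m+1}o_{m+1}'}\Big)\Big(\sum_{k_{m+1},l_{m+1}}\sqrt{\Pi_{k_mk_{m+1}}\Pi_{l_ml_{m+1}}}\,[Z_{m+1}]_{k_{m+1}l_{m+1}}\Big),
$$
which introduces at each step a fresh pair of observable indices $(o_{m+1},o_{m+1}')$ carrying a factor $\sqrt{Q_{k_mo_{m+1}}Q_{l_mo_{m+1}'}}$ and a fresh pair of doubled hidden indices $(k_{m+1},l_{m+1})$ carrying $\sqrt{\Pi_{k_mk_{m+1}}\Pi_{l_ml_{m+1}}}$.

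Iterating this recursion from $m=0$ down to the innermost level (relabelling the level-$0$ hidden indices as $i,j$) and finally expanding $\varphi_{H,0}(Z_0)=\sum_{i,j}[Z_0]_{ij}\,\varphi_{H,0}(e_{ij})$ collects all the emission factors into $\sqrt{Q_{io_1}Q_{jo_1'}}\prod_{m=1}^{n}\sqrt{Q_{k_mo_{m+1}}Q_{l_mo_{m+1}'}}$ and all the transition factors into $\sqrt{\Pi_{ik_1}\Pi_{jl_1}}\prod_{m=1}^{n-1}\sqrt{\Pi_{k_mk_{m+1}}\Pi_{l_ml_{m+1}}}\,\sqrt{\Pi_{k_nh_{n+1}}\Pi_{l_nh_{n+1}}}$, the last factor coming from $P_H(\id_d)$ with $h_{n+1}$ the summed diagonal index. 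This reproduces exactly \eqref{joint-exp-qHMP_entangled_2}. I expect the main obstacle to be purely organisational: keeping the bookkeeping of the doubled entangled path indices $(k_m,l_m)$ and the observable index pairs $(o_m,o_m')$ consistent across the levels, matching them to the asymmetric index naming used in the statement (the level-$0$ indices $i,j$, the shift between the emission index $o_{m+1}$ and the hidden index $k_{m+1}$, and the terminal index $h_{n+1}$), and verifying that no square-root factor is double-counted or dropped when the nested Schur products are flattened into a single multi-sum.
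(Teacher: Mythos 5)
Your proposal is correct and follows essentially the same route as the paper: the first identity is obtained by unfolding the nested blocks $\mathcal{E}_H(\mathcal{E}_{H,O}(a_m\otimes b_m)\otimes X)=a_m\diamond P_{H,O}(b_m)\diamond P_H(X)$ from the inside out, and the second by expanding the resulting Schur-product tower entrywise and iterating, exactly as in the paper's proof. Your explicit recursion for $[Z_m]_{k_ml_m}$ is just a cleaner bookkeeping of the same matrix-element computation the paper carries out for the two innermost levels before invoking iteration.
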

\begin{proof}
Since  $\mathcal{E}_{H,O}$ is identity preserving
\begin{eqnarray*}
\mathcal{E}_{H}(\mathcal{E}_{H,O}(a_n\otimes b_n)\otimes \id_{d}) &=&\mathcal{E}_{H,O}(a_n\otimes b_n)\diamond P_H(\id_{d})\\
&=& a_n \diamond P_{H,O}(b_n)\diamond P_H(\id_{d})\\
\end{eqnarray*}
Then
\begin{eqnarray*}
 \mathcal{E}_{H}(\mathcal{E}_{H,O}(a_{n-1}\otimes b_{n-1})\otimes \mathcal{E}_{H}(\mathcal{E}_{H,O}(a_n\otimes b_n)\otimes \id_{d}))  &=&\mathcal{E}_{H,O}(a_{n-1}\otimes b_{n-1})\diamond P_H(\mathcal{E}_{H}(\mathcal{E}_{H,O}(a_n\otimes b_n)\otimes \id_{d})))\\
&=& a_{n-1}\diamond P_{H,O}(b_{n-1})\diamond P_H(a_n \diamond P_{H,O}(b_n)\diamond P_H(\id_{d}))
\end{eqnarray*}
Iterating the above procedure then applying the initial state $\varphi_{H,0}$ we obtain \eqref{joint-exp-qHMP_entangled_1}.\\
Moreover, employing \eqref{joint-exp-qHMP_entangled_1}, it follows that
$$
a_n \diamond P_{H,O}(b_{n})\diamond P_{H}(\id_d)= \sum_{o_{n+1},o_{n+1}',h_{n+1}, i,j}a_{n;ij}b_{n;o_{n+1}o_{n+1}'}\sqrt{Q_{io_{n+1}}Q_{jo_{n+1}'}}\sqrt{\Pi_{ih_{n+1}}\Pi_{jh_{n+1}}} e_{ij}
$$
and
$$
P_{H}(a_n \diamond P_{H,O}(b_n)\diamond P_{H}(\id_d))=\sum_{\substack{k_n,l_{n},\\h_{n+1},o_{n+1},o_{n+1}' i,j}}a_{n,k_n l_n}b_{n;o_{n+1}o_{n+1}'}\sqrt{Q_{k_no_{n+1}}Q_{l_no_{n+1}'}}
\sqrt{\Pi_{i k_n}\Pi_{jl_{n}}\Pi_{k_n h_{n+1}}\Pi_{l_nh_{n+1}}}e_{ij}
$$
It follows that
$$
P_{H}(a_{n-1}\diamond P_{H,O}(b_{n-1})\diamond P_H (a_n \diamond P_{H,O}(b_{n})\diamond P_{H}(\id_d))))
$$
$$
=\sum_{\substack{k_{n-1},k_n,h_{n+1},\\l_{n-1},l_{n},o_{n+1},o_{n+1}',o_{n},o_{n}'\\i,j}}a_{n-1;k_{n-1}l_{n-1}}b_{n-1,o_n,o_{n}'}a_{n,k_nl_n}b_{n;o_{n+1}o_{n+1}'}\sqrt{Q_{k_{n-1}o_{n}}Q_{l_{n-1}o_{n}'}Q_{k_no_{n+1}}Q_{l_no_{n+1}'}}
$$
$$
\sqrt{\Pi_{ik_{n-1}}\Pi_{jl_{n-1}}\Pi_{k_{n-1}k_n}\Pi_{l_{n-1}l_n}\Pi_{k_{n}h_{n+1}}\Pi_{l_nh_{n+1}}}e_{ij}
$$
So by iteration,
$$
a_0\diamond P_{H,O}(b_0)\diamond P_{H}(a_1\diamond P_{H,O}(b_1)\diamond P_{H}(\cdots
\diamond P_{H}(a_{n-1}\diamond P_{H,O}(b_{n-1})\diamond P_H (a_n \diamond P_{H,O}(b_{n-1})\diamond P_{H}(\id_d)))\cdots)))
$$
$$
=\sum_{\substack{i,j,h_{n+1}\\
o_{1},o_{1}',\cdots,o_{n+1},o_{n+1}'\\
l_1,k_1,\cdots,l_n,k_n}}a_{0,ij}b_{0,o_1,o_{1}'}a_{1,k_1l_1}b_{1,o_2,o_{2}'}\cdots a_{n-1,k_{n-1}l_{n-1}}b_{n-1,o_n,o_{n}'}a_{n,k_nl_n}b_{n-1,o_{n+1},o_{n+1}'}\sqrt{Q_{io_1}Q_{jo_1'}}\sqrt{\Pi_{ik_1}\Pi_{jl_1}}
$$
$$
\times\left(\sqrt{\prod_{m=1}^{n}Q_{k_mo_{m+1}}Q_{l_mo_{m+1}'}}\right) \left(\sqrt{\prod_{m=1}^{n-1}\Pi_{k_mk_{m+1}}\Pi_{l_ml_{m+1}}}\right)\sqrt{\Pi_{k_nh_{n+1}}\Pi_{l_nh_{n+1}}}e_{ij}
$$

\end{proof}

\section{Recurrence of quantum Markov chains}
The main focus of this section is on the quantum Markov chain recurrence concepts.

\begin{definition}\label{stopping_time}
A (discrete) stopping time associated to a projection $e \in \mathcal{A}_H$ is a sequence $\{\tau_k\}_{k\geq0}$ with the following properties:
\begin{enumerate}[label=\alph*)]
  \item $\tau_k\in \mathcal{A}_{H,[0,k]}$, $\forall k\geq 0$;
  \item $\tau_k$ is a projection of $\mathcal{A}_H$, $\forall k\geq 0$;
  \item The $\tau_k$ are mutually orthogonal.
\end{enumerate}
\end{definition}
One can canonically associate a stopping time to any projection $e \in \mathcal{M}_d$ as follows:
$$
\begin{aligned}
\tau_{e; 0} & =e^{(0)} \otimes \id_{[1}= j_{H_0}(e) \\
\tau_{e; 1} & =e^{\perp} \otimes e \otimes \id_{[2}=j_{H_0}(e^{\perp})j_{H_1}(e) \\
 & \quad \quad \quad \vdots\\
\tau_{e; k} & =(e^{\perp})^{\otimes^k} \otimes e \otimes \id_{[k+1}=j_{H_0}(e^{\perp})\cdots j_{H_{k-1}}(e^{\perp})j_{H_k}(e).
\end{aligned}
$$
\begin{remark}
By relating projection $e$ to an event $E$, and interpreting index $n\in \mathbb{N}$ as a discrete time marker, the projection $\tau_k$ signifies the occurrence of event $E$ for the first time at moment $k$ (see \cite{AccDko}) . It's evident that the sequence $(\tau_k)$ meets the requirements of being a stopping time as defined in Definition \ref{stopping_time}.
\end{remark}

Put
$$
\tau_{e; n; \infty}:=j_{H_0}(e^{\perp})\cdots j_{H_{n-1}}(e^{\perp})j_{H_n}(e^{\perp});\quad \tau_{e; \infty}:=\lim_{n\rightarrow \infty}\tau_{\infty}^{n}=\bigotimes_{\mathbb{N}}e^{\perp}
$$
\textbf{Interpretation:}
The projection $\tau_{e; n; \infty}$ represents the scenario where event $E$ doesn't happen during the first $n$ instants, and the projection $\tau_{e; \infty}$ corresponds to the situation where event $E$ never occurs (see \cite{AccDko}).\\

Following \cite{AcFi-QMS}, we have the next result.

\begin{theorem}
Let $\varphi_H \equiv (\varphi_0, \mathcal{E})$ be a (homogeneous) quantum Markov chain on $\mathcal{A}_H$. There exists a unique
conditional expectation $E_{0]}$ from $\mathcal{A}_H$ into $\mathcal{M}_{d}$ characterized by
\begin{equation}\label{CE}
E_{0]}(a_0\otimes\cdots\otimes a_n)= \mathcal{E}(a_0\otimes \mathcal{E}(a_1\otimes \cdots\otimes \mathcal{E}(a_n\otimes \id))\cdots)
\end{equation}
for any $n\in \mathbb{N}$ and for all $a_0\otimes\cdots\otimes a_n\in \mathcal{A}_{H,[0,n]}$. Furthermore, one has
\begin{equation}\label{E_0_CE_varphi}
  \varphi_{H}(.)= \varphi_{0}\circ E_{0]}
\end{equation}
\end{theorem}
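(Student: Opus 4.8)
The plan is to realize $E_{0]}$ as the limit of a consistent family of finite-volume maps and then to upgrade the resulting map to a conditional expectation. First I would define, for each $n\in\mathbb{N}$, the finite-volume map $E_{[0,n]}:\mathcal{A}_{H,[0,n]}\to\mathcal{M}_d$ by the right-hand side of \eqref{CE}, i.e.
$$
E_{[0,n]}(a_0\otimes\cdots\otimes a_n):=\mathcal{E}(a_0\otimes\mathcal{E}(a_1\otimes\cdots\otimes\mathcal{E}(a_n\otimes\id)\cdots)).
$$
Reading this nested expression from the inside out exhibits $E_{[0,n]}$ as a finite composition of completely positive unital maps obtained from $\mathcal{E}$ by tensoring with identities; since $\mathcal{E}$ is a transition expectation (completely positive and identity preserving), each factor is completely positive and unital, and hence so is $E_{[0,n]}$, which is in particular a contraction.

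The key step is consistency of the family $(E_{[0,n]})_n$. Embedding $\mathcal{A}_{H,[0,n]}\hookrightarrow\mathcal{A}_{H,[0,n+1]}$ via $x\mapsto x\otimes\id$ and setting $a_{n+1}=\id$ in the defining formula, the innermost factor becomes $\mathcal{E}(\id\otimes\id)=\id$ by unitality, so that $E_{[0,n+1]}(x\otimes\id)=E_{[0,n]}(x)$ for all $x\in\mathcal{A}_{H,[0,n]}$. Hence the maps agree on overlaps and determine a single completely positive unital map on the norm-dense $*$-subalgebra $\bigcup_n\mathcal{A}_{H,[0,n]}\subset\mathcal{A}_H$. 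Since each $E_{[0,n]}$ is a contraction, this map extends uniquely, by continuity, to a completely positive unital contraction $E_{0]}:\mathcal{A}_H\to\mathcal{M}_d$ satisfying \eqref{CE} on every finite-volume subalgebra by construction.

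It remains to check that $E_{0]}$ is a genuine conditional expectation onto $\mathcal{M}_d\cong\mathcal{A}_{H_0}$ and that \eqref{E_0_CE_varphi} holds. The decisive ingredient for the first point is that the transition expectation fixes the first tensor factor, $\mathcal{E}(a\otimes\id)=a$; granting this, $E_{0]}(j_{H_0}(a))=\mathcal{E}(a\otimes\id)=a$, so $E_{0]}$ acts as the identity on the subalgebra $\mathcal{A}_{H_0}$ onto which it maps. A unital completely positive contraction that restricts to the identity on its range $C^*$-subalgebra is a norm-one projection, so by Tomiyama's theorem it is automatically a conditional expectation, which yields idempotency together with the bimodule property $E_{0]}(axb)=aE_{0]}(x)b$ for $a,b\in\mathcal{M}_d$. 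Finally, to obtain \eqref{E_0_CE_varphi} I would compare $\varphi_0\circ E_{0]}$ with the joint expectation \eqref{phiH}: by \eqref{CE} the two coincide on each $\mathcal{A}_{H,[0,n]}$, and both are bounded, so they agree on all of $\mathcal{A}_H$. Uniqueness is then immediate, since any conditional expectation obeying \eqref{CE} is prescribed on the dense local subalgebra and is continuous.

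I expect the genuine difficulty to lie in the conditional-expectation step rather than in the limit construction. The projective consistency and the continuous extension are routine consequences of the unitality and complete positivity of $\mathcal{E}$; the substantive point is that $E_{0]}$ is idempotent and a bimodule map, which rests on the fixing property $\mathcal{E}(a\otimes\id)=a$ of the transition expectation and on Tomiyama's theorem. One should also confirm that complete positivity and the module structure genuinely survive the passage to the inductive limit, and keep careful track of the identification of $\mathcal{M}_d$ with $\mathcal{A}_{H_0}$ throughout.
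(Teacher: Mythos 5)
First, a point of comparison: the paper does not actually prove this statement --- it is quoted from \cite{AcFi-QMS} with the phrase ``Following \cite{AcFi-QMS}, we have the next result'', so there is no internal proof to measure your argument against. Judged on its own terms, the first half of your argument is the standard construction and is correct: defining the finite-volume maps $E_{[0,n]}$ by the nested formula, observing that each is a composition of completely positive unital maps built from $\mathcal{E}$, verifying the consistency $E_{[0,n+1]}(x\otimes\id)=E_{[0,n]}(x)$ from $\mathcal{E}(\id\otimes\id)=\id$, and extending by continuity to the C$^{*}$-inductive limit $\mathcal{A}_H$. The identity $\varphi_{H}=\varphi_{0}\circ E_{0]}$ then follows by comparing with \eqref{phiH} on the dense local subalgebra, and uniqueness is clear as you say.

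The gap is in the step you yourself single out as decisive. You assume that a transition expectation satisfies $\mathcal{E}(a\otimes\id)=a$, and you use this both to conclude that $E_{0]}$ restricts to the identity on $\mathcal{A}_{H_0}\cong\mathcal{M}_d$ and to invoke Tomiyama's theorem. That property is not part of the definition of a transition expectation in this paper (only complete positivity and $\mathcal{E}(\id\otimes\id)=\id$ are required), and it fails for the entangled transition expectation the paper actually works with: $\mathcal{E}_H(a\otimes\id_d)=a\diamond P_H(\id_d)$, and $P_H(\id_d)=\sum_{k,l}\bigl(\sum_j\sqrt{\Pi_{kj}\Pi_{lj}}\bigr)e_{kl}$ has in general nonzero off-diagonal entries, so $a\diamond P_H(\id_d)\neq a$ unless the rows of $\Pi$ have pairwise disjoint supports. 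Consequently $E_{0]}$ is in general not a norm-one projection onto $j_{H_0}(\mathcal{M}_d)$, Tomiyama's theorem does not apply, and the bimodule property you derive from it is not available. What your construction actually produces is a unital completely positive map $\mathcal{A}_H\to\mathcal{M}_d$, which is what ``conditional expectation'' means in the generalized (Accardi--Fidaleo) sense intended here. To repair the write-up, either drop the Tomiyama step and state the conclusion for unital CP maps, or add the hypothesis $\mathcal{E}(a\otimes\id)=a$ explicitly --- noting that it excludes the entangled examples that are the subject of this paper.
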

\begin{definition}
 Let $\varphi_H \equiv\left(\varphi_{0}, \mathcal{E}\right)$ be a (homogeneous) quantum Markov chain on $\mathcal{A}_H$. A projection $e \in$ $\mathcal{M}_d$ is said to be
\begin{enumerate}
\item $\mathcal{E}$-completely accessible if
\begin{equation}\label{
E_accessible}
E_{o]}\left(\tau_{e,\infty}\right):=\lim _{n \rightarrow \infty} E_{o]}\left(\tau_{e ; n; \infty}\right)=0
\end{equation}
\item $\varphi_H$-completely accessible if
\begin{equation}\label{varphi_accessible}
\varphi_H\left(\tau_{e ; \infty}\right)=0
\end{equation}
\item $\mathcal{E}$-recurrent if $0<\operatorname{Tr}(\mathcal{E}(e \otimes \mathbb{I}))<\infty$ and one has

\begin{equation}\label{E_recurrent}
\frac{1}{\operatorname{Tr}(\mathcal{E}(e \otimes \mathbb{I}))} \operatorname{Tr}\left(E_{o]}\left(\sum_{n \geq 0} e \otimes \tau_{e ;n}\right)\right)=1.
\end{equation}
\item $\varphi_H$-recurrent if $\varphi_H\left(j_{H_0}(e)\right) \neq 0$ and

\begin{equation}\label{varphi_recurrent}
\frac{1}{\varphi_H\left(j_{H_0}(e)\right)} \varphi_H\left(\sum_{n} e \otimes \tau_{e ; n}\right)=1
\end{equation}
\end{enumerate}
\end{definition}

\begin{definition}
Let $\varphi_H \equiv\left(\varphi_{0}, \mathcal{E}\right)$ be (homogeneous) a quantum Markov chain. Let $e, f \in \operatorname{Proj}(\mathcal{M}_d), e, f \neq$ 0. The projection $f$ is
\begin{enumerate}
  \item $\mathcal{E}$-accessible from $e$ (and we write $e \rightarrow^{\mathcal{E}} f$ ) if there exists $m \in \mathbb{N}$ such that
$$
E_{o]}\left(j_{H_0}(e) j_{H_m}(f)\right) \neq 0
$$
If $e\rightarrow^{\mathcal{E}}  f$ and $f \rightarrow^{\mathcal{E}}  e$, then we sat that $e$ and $f$ $\mathcal{E}$- communicate and we write $e\leftrightarrow^{\mathcal{E}}  f$.
\item $\varphi_H$-accessible from $e$ (we denote it as $e \rightarrow^{\varphi_H} f$ if there exists $m \in \mathbb{N}$ such that
$$
\varphi_H\left(j_{H_0}(e) j_{H_{m}}(f)\right) \neq 0
$$
If $e\rightarrow^{\varphi_H}  f$ and $f \rightarrow^{\varphi_H}  e$, then we sat that $e$ and $f$ $\varphi_H$- communicate and we write $e\leftrightarrow^{\varphi_H}  f$.
\end{enumerate}
\end{definition}

\begin{lemma}\label{lemma_identity}
In the notations provided earlier:

\begin{equation}\label{Sum_stopping_time}
\sum_{n \geq 0} \tau_{e ; n}=\id_{\mathcal{A}_{H}}-\tau_{e; \infty}
\end{equation}
Moreover, a projection $e$ is $\mathcal{E}$- completely accessible if and only if
\begin{equation}\label{E_completely_accessible}
  E_{0]}\left(\sum_{n}\tau_{e;n}\right)=\id_d
\end{equation}
\end{lemma}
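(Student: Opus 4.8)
The plan is to establish the operator identity \eqref{Sum_stopping_time} first by a telescoping argument, and then to read off the characterization \eqref{E_completely_accessible} by applying the conditional expectation $E_{0]}$ to it. Setting the convention $\tau_{e;-1;\infty}:=\id_{\mathcal{A}_H}$, I would begin from the observation that each first‑passage projection factors as $\tau_{e;n}=\tau_{e;n-1;\infty}\,j_{H_n}(e)$, since both sides are the common product $j_{H_0}(e^\perp)\cdots j_{H_{n-1}}(e^\perp)$ multiplied by the $n$‑th factor $j_{H_n}(e)$. Because $j_{H_n}$ is a unital $*$‑homomorphism one has $j_{H_n}(e)+j_{H_n}(e^\perp)=\id_{\mathcal{A}_H}$, and the factors act on distinct tensor legs, hence commute; therefore
$$\tau_{e;n-1;\infty}=\tau_{e;n-1;\infty}\big(j_{H_n}(e)+j_{H_n}(e^\perp)\big)=\tau_{e;n}+\tau_{e;n;\infty},$$
which yields the telescoping relation $\tau_{e;n}=\tau_{e;n-1;\infty}-\tau_{e;n;\infty}$.

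Summing this over $n$ from $0$ to $N$ collapses the right‑hand side to
$$\sum_{n=0}^{N}\tau_{e;n}=\tau_{e;-1;\infty}-\tau_{e;N;\infty}=\id_{\mathcal{A}_H}-\tau_{e;N;\infty}.$$
Letting $N\to\infty$ and recalling that $\tau_{e;N;\infty}=(e^\perp)^{\otimes(N+1)}\otimes\id_{[N+1}$ converges to $\tau_{e;\infty}=\bigotimes_{\mathbb{N}}e^\perp$ then gives \eqref{Sum_stopping_time}. It is worth noting in passing that the mutual orthogonality of the $\tau_{e;n}$ (condition c) of Definition \ref{stopping_time}) is automatic from this factorization, since for $m<n$ the factor $j_{H_m}(e)$ in $\tau_{e;m}$ meets the factor $j_{H_m}(e^\perp)$ in $\tau_{e;n}$, and $e\,e^\perp=0$.

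For the second assertion I would simply apply $E_{0]}$ to \eqref{Sum_stopping_time}. As a conditional expectation onto $\mathcal{M}_d$ it is linear and unital, so $E_{0]}(\id_{\mathcal{A}_H})=\id_d$; and by its normality it commutes with the defining limit, giving $E_{0]}(\tau_{e;\infty})=\lim_{n}E_{0]}(\tau_{e;n;\infty})$, which is exactly the quantity appearing in the definition of $\mathcal{E}$‑complete accessibility. Consequently
$$E_{0]}\Big(\sum_{n}\tau_{e;n}\Big)=\id_d-E_{0]}(\tau_{e;\infty}),$$
and the left‑hand side equals $\id_d$ if and only if $E_{0]}(\tau_{e;\infty})=0$, i.e.\ if and only if $e$ is $\mathcal{E}$‑completely accessible.

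The main obstacle is the analytic justification of the two limiting steps rather than the algebra: one must argue that the partial products $\tau_{e;N;\infty}$ converge to $\tau_{e;\infty}$ in the appropriate (strong, or $\sigma$‑weak) operator topology on the infinite tensor product, and that $E_{0]}$ is normal so that the interchange of $E_{0]}$ with this limit is legitimate. Both are standard facts about the minimal tensor product $\bigotimes_{\mathbb{N}}\mathcal{B}$ and about the conditional expectation produced by the previous theorem, so once they are invoked the telescoping identity does all the remaining work.
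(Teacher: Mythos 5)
Your proof is correct and follows essentially the same route as the paper: the finite telescoping identity $\sum_{n=0}^{N}\tau_{e;n}=\id_{\mathcal{A}_H}-\tau_{e;N;\infty}$ followed by a passage to the limit, and then an application of $E_{0]}$ to the finite identity to obtain the accessibility criterion. Your version merely makes the telescoping step and the topological justification of the limits more explicit than the paper does.
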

\begin{proof}
It can be observed that
$$
\tau_{e;0}+\tau_{e;1}= \id_{\mathcal{A}_H}- j_{H_0}(e^{\perp})j_{H_1}(e^{\perp})
$$
Iterating this procedure, it becomes clear that
\begin{equation}\label{iterating_stopping}
  \tau_{e;0}+\tau_{e;1}+\cdots+\tau_{e;n}= \id_{\mathcal{A}_H}-j_{H_0}(e^{\perp})j_{H_1}(e^{\perp})\cdots j_{H_n}(e^{\perp})=  \id_{\mathcal{A}_H}-\tau_{e,n,\infty}
\end{equation} So by taking $n\rightarrow \infty$, we get \eqref{Sum_stopping_time}. Now, by considering the expectation $E_{0]}$ for both sides of \eqref{iterating_stopping} and taking the limit as $n\rightarrow \infty$, this leads us to \eqref{E_completely_accessible}.
\end{proof}

\begin{theorem}
Let $\varphi_H \equiv\left(\varphi_{0}, \mathcal{E}\right)$ be a (homogeneous) quantum Markov chain on $\mathcal{A}_{H}$. Let $e \in \mathcal{M}_d$ be a projection
\begin{enumerate}[label=\alph*)]
\item $e$ is $\mathcal{E}$-recurrent if and only if

\begin{equation}\label{E_recuurent_condition}
\mathcal{E}\left(e \otimes E_{o]}\left(\tau_{e ; \infty}\right)\right)=0
\end{equation}

\item  $e$ is $\varphi_H$-recurrent if and only if one has

\begin{equation}\label{varphi_recurrent_condition}
\varphi_H\left(e \otimes \tau_{e ; \infty}\right)=0
\end{equation}

\end{enumerate}
\end{theorem}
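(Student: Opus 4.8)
The plan is to prove both equivalences by unwinding the definitions of $\mathcal{E}$-recurrence and $\varphi_H$-recurrence and combining them with the telescoping identity from Lemma~\ref{lemma_identity}. The central algebraic input is equation \eqref{Sum_stopping_time}, namely $\sum_{n\ge 0}\tau_{e;n}=\id_{\mathcal{A}_H}-\tau_{e;\infty}$, which converts an infinite sum of disjoint ``first hitting'' events into a single complementary projection. First I would treat part (a). By definition, $e$ is $\mathcal{E}$-recurrent precisely when $0<\Tr(\mathcal{E}(e\otimes\id))<\infty$ and $\Tr\!\big(E_{0]}(\sum_{n\ge0}e\otimes\tau_{e;n})\big)=\Tr(\mathcal{E}(e\otimes\id))$. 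I would pull the projection $e$ (living in site $0$) out through the conditional expectation structure and use the defining property \eqref{CE} of $E_{0]}$ to write $E_{0]}(e\otimes\tau_{e;n})=\mathcal{E}\big(e\otimes E_{0]}(\tau_{e;n})\big)$, exploiting that $\tau_{e;n}$ is supported on sites $\ge 1$ so the factorization through $\mathcal{E}$ is legitimate.

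Next I would interchange the sum with $E_{0]}$ and $\mathcal{E}$ (both linear and, after normalization, suitably continuous on the finite-dimensional local algebras) and apply Lemma~\ref{lemma_identity} in the form \eqref{Sum_stopping_time} to get $\sum_{n\ge0}E_{0]}(\tau_{e;n})=E_{0]}(\id_{\mathcal{A}_H})-E_{0]}(\tau_{e;\infty})=\id_d-E_{0]}(\tau_{e;\infty})$, using that $E_{0]}$ is identity-preserving. Substituting this back yields
$$
\Tr\Big(E_{0]}\big(\sum_{n\ge0}e\otimes\tau_{e;n}\big)\Big)=\Tr\big(\mathcal{E}(e\otimes\id)\big)-\Tr\big(\mathcal{E}(e\otimes E_{0]}(\tau_{e;\infty}))\big).
$$
The recurrence condition \eqref{E_recurrent} says the left side equals $\Tr(\mathcal{E}(e\otimes\id))$, so it holds if and only if $\Tr\big(\mathcal{E}(e\otimes E_{0]}(\tau_{e;\infty}))\big)=0$. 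To upgrade this trace identity to the operator identity \eqref{E_recuurent_condition}, I would note that $\tau_{e;\infty}\ge0$ implies $E_{0]}(\tau_{e;\infty})\ge0$ by positivity of the conditional expectation, and complete positivity of $\mathcal{E}$ gives $\mathcal{E}(e\otimes E_{0]}(\tau_{e;\infty}))\ge0$; a positive operator with vanishing trace is zero, which establishes the equivalence.

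Part (b) runs along the same lines but is cleaner, since $\varphi_H=\varphi_0\circ E_{0]}$ is a scalar-valued state, so no positivity-to-zero promotion is needed. I would apply $\varphi_H$ to \eqref{Sum_stopping_time} multiplied by $j_{H_0}(e)$, use additivity and \eqref{E_0_CE_varphi}, and read off that the normalized sum in \eqref{varphi_recurrent} equals $1$ exactly when $\varphi_H(e\otimes\tau_{e;\infty})=0$, given the standing hypothesis $\varphi_H(j_{H_0}(e))\neq0$. The main obstacle I anticipate is justifying the term-by-term interchange of the infinite sum with $E_{0]}$ (and the subsequent trace) rigorously: one must argue that the partial sums $\sum_{n\le N}\tau_{e;n}$ increase to $\id-\tau_{e;\infty}$ strongly and that $E_{0]}$, being normal (or at least continuous on the relevant increasing bounded net of projections), commutes with the limit. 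I would handle this by invoking the monotone convergence of the projection net $\tau_{e;n,\infty}\downarrow\tau_{e;\infty}$ together with the normality of the conditional expectation $E_{0]}$ guaranteed by its construction in the preceding theorem, which is exactly the continuity already used implicitly in defining $E_{0]}(\tau_{e;\infty})$ as a limit.
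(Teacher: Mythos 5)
Your argument is correct, and its backbone is the same as the paper's: everything reduces to the telescoping identity $\sum_{n\ge 0}\tau_{e;n}=\id_{\mathcal{A}_H}-\tau_{e;\infty}$ from Lemma \ref{lemma_identity}, tensored with $e$ and pushed through $E_{0]}$ (resp.\ $\varphi_H$). The difference is in how the definition of recurrence is connected to that identity. The paper does not argue from its own Definition of $\mathcal{E}$-recurrence; instead it quotes Proposition 2.1 of \cite{AccDko}, which asserts the operator-level characterization $E_{0]}\bigl(e\otimes\sum_n\tau_{e;n}\bigr)=E_{0]}(e)$, and then substitutes the telescoped sum. You work directly from the trace-normalized condition \eqref{E_recurrent} stated in the paper, obtain the scalar identity $\Tr\bigl(\mathcal{E}(e\otimes E_{0]}(\tau_{e;\infty}))\bigr)=0$, and then upgrade it to the operator identity \eqref{E_recuurent_condition} by observing that $e\otimes E_{0]}(\tau_{e;\infty})\ge 0$ and $\mathcal{E}$ is (completely) positive, so a trace-zero positive operator vanishes. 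That positivity step is genuinely needed if one starts from \eqref{E_recurrent}, and it is exactly what the citation to \cite{AccDko} conceals; your version is therefore more self-contained. You are also right to flag, and correctly dispose of, the two points the paper passes over silently: the identification $E_{0]}(e\otimes\tau_{e;n})=\mathcal{E}\bigl(e\otimes E_{0]}(\tau_{e;n})\bigr)$ coming from the nested form \eqref{CE} together with homogeneity, and the interchange of $E_{0]}$ with the monotone limit $\tau_{e;n;\infty}\downarrow\tau_{e;\infty}$, which is harmless here since the local algebras are finite dimensional and $E_{0]}(\tau_{e;\infty})$ is itself defined as that limit in \eqref{E_accessible}. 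Part (b) is handled identically in both treatments.
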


\begin{proof}
 We shall use the following necessary and sufficient condition for $e$ to be $\mathcal{E}$- recurrent (Proposition 2.1, \cite{AccDko}):
$$
  E_{0]}(e\otimes \sum_{n}\tau_{e,n})= E_{0]}(e)
$$
and  As a consequence of Lemma \ref{lemma_identity}, we can establish:

$$
\sum_{n \geq 0} e \otimes \tau_{e, n}=e \otimes\id_{\mathcal{A}_H} -e \otimes \tau_{e; \infty}
$$
As a result, we obtain both (i) and (ii).

\end{proof}

\begin{corollary}\label{E_reccurent_varphi}
Let $\varphi_H \equiv\left(\varphi_{0}, \mathcal{E}\right)$ be a (homogeneous) quantum Markov chain.\\ Every $\mathcal{E}$-recurrence projection implies $\varphi_H$-recurrence, and conversely, when the initial state $\varphi_{0}$ is faithful, every $\varphi_H$-recurrence projection implies $\mathcal{E}$-recurrence.
\end{corollary}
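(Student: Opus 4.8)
The plan is to reduce both implications to the two operator identities supplied by the preceding theorem, namely that $e$ is $\mathcal{E}$-recurrent iff $\mathcal{E}(e\otimes E_{0]}(\tau_{e;\infty}))=0$ (equation \eqref{E_recuurent_condition}) and that $e$ is $\varphi_H$-recurrent iff $\varphi_H(e\otimes\tau_{e;\infty})=0$ (equation \eqref{varphi_recurrent_condition}), and then to link these two quantities through the factorization $\varphi_H=\varphi_0\circ E_{0]}$ of \eqref{E_0_CE_varphi}.

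First I would establish the intertwining identity
$$E_{0]}(e\otimes\tau_{e;\infty}) = \mathcal{E}\big(e\otimes E_{0]}(\tau_{e;\infty})\big).$$
This follows from the nested definition \eqref{CE} of $E_{0]}$ together with the shift-homogeneity of $\mathcal{E}$ and the product form $\tau_{e;\infty}=\bigotimes_{\mathbb{N}}e^{\perp}$. Concretely, one works with the finite truncations $e\otimes\tau_{e;n;\infty}$, applies \eqref{CE} to peel off the zeroth tensor factor (so that $E_{0]}(e\otimes Y)=\mathcal{E}(e\otimes E_{0]}(\widehat{Y}))$, where $\widehat{Y}$ is $Y$ shifted back by one site), and then passes to the limit $n\to\infty$; this is legitimate since the limit defining $E_{0]}(\tau_{e;\infty})=\lim_n E_{0]}(\tau_{e;n;\infty})$ exists and $\mathcal{E}$ is norm-continuous on the finite-dimensional algebra $\mathcal{M}_d\otimes\mathcal{M}_d$. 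Composing with $\varphi_0$ and invoking \eqref{E_0_CE_varphi} then yields
$$\varphi_H(e\otimes\tau_{e;\infty}) = \varphi_0\big(\mathcal{E}(e\otimes E_{0]}(\tau_{e;\infty}))\big).$$

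For the forward implication, suppose $e$ is $\mathcal{E}$-recurrent. By \eqref{E_recuurent_condition} the operator $A:=\mathcal{E}(e\otimes E_{0]}(\tau_{e;\infty}))$ vanishes, so the displayed identity gives $\varphi_H(e\otimes\tau_{e;\infty})=\varphi_0(A)=0$, which is exactly \eqref{varphi_recurrent_condition}; hence $e$ is $\varphi_H$-recurrent. For the converse, assume $\varphi_0$ faithful and $e$ $\varphi_H$-recurrent. The key observation is that $A$ is positive: $\tau_{e;\infty}$ is a projection and $E_{0]}$ is completely positive, so $E_{0]}(\tau_{e;\infty})\geq 0$, whence $e\otimes E_{0]}(\tau_{e;\infty})\geq 0$ and finally $A\geq 0$ by complete positivity of $\mathcal{E}$. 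From \eqref{varphi_recurrent_condition} and the identity we obtain $\varphi_0(A)=0$, and faithfulness of $\varphi_0$ forces $A=0$, i.e. \eqref{E_recuurent_condition} holds, so $e$ is $\mathcal{E}$-recurrent.

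The main obstacle is the intertwining identity of the second paragraph; everything else is a one-line positivity-plus-faithfulness argument. The delicate points there are the bookkeeping of the site shift when stripping the zeroth tensor factor from $E_{0]}$, and the justification of interchanging $\mathcal{E}$ with the limit defining $E_{0]}(\tau_{e;\infty})$. The normalization requirements in the two notions of recurrence — the denominators $\Tr(\mathcal{E}(e\otimes\id))$ and $\varphi_H(j_{H_0}(e))$ being nonzero, the latter being equal to $\varphi_0(\mathcal{E}(e\otimes\id))$ — are regarded as part of the hypothesis that $e$ be a recurrent projection, and under faithfulness of $\varphi_0$ they transfer between the two pictures as well.
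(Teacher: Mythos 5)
Your proposal is correct and follows essentially the same route as the paper: both factor $\varphi_H(e\otimes\tau_{e;\infty})$ through $\varphi_0\circ E_{0]}$, rewrite it as $\varphi_0\bigl(\mathcal{E}(e\otimes E_{0]}(\tau_{e;\infty}))\bigr)$, and then use positivity of that operator together with faithfulness of $\varphi_0$ for the converse. The only difference is that you spell out the intertwining identity and the positivity argument in more detail than the paper, which simply asserts them.
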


\begin{proof}
-- Let $e \in \mathcal{A}_H$ be a projection. From \eqref{E_0_CE_varphi}, one has
$$
\begin{aligned}
\left.\varphi_H\left(e \otimes \tau_{e; \infty}\right)\right) = & \varphi_{0}\left(E_{0]}\left(e \otimes \tau_{e ; \infty}\right)\right) \\
= & \varphi_{0}\left(\mathcal{E}\left(e \otimes E_{0]}\left(\tau_{e ; \infty}\right)\right)\right)
\end{aligned}
$$
Consequently, if $\mathcal{E}\left(e \otimes E_{0]}\left(\tau_{e ;\infty}\right)\right)=0$, then $\varphi_H\left(e \otimes \tau_{e ;\infty}\right)=0$, demonstrating the first implication.\\
-- Assuming the initial state $\varphi_{0}$ is faithful, and given that $\mathcal{E}\left(e \otimes E_{0]}\left(\tau_{e ;\infty}\right)\right) \geq 0$, the preceding calculation leads us to:
$$
\varphi_H\left(e \otimes \tau_{e; \infty}\right)=0 \Rightarrow \mathcal{E}\left(e \otimes E_{0]}\left(\tau_{e ;\infty}\right)\right)=0
$$

This establishes the inverse implication, concluding the proof.
\end{proof}

\section{Recurrence for underlying Markov process associated to Bi- entangled hidden Markov models}

In this section, we focus on the concept of recurrence for the quantum Markov chain $\varphi_H^{(O)}$ on $\mathcal{A}_H$, which is derived from the restriction of the bi-entangled hidden Markov models $\varphi_{H,O}$ on the algebra $\mathcal{A}_H$.
\begin{equation}\label{phiH(O)}
\varphi_H^{(O)} := \varphi_{H,O}\lceil_{\mathcal{A}_H}
\end{equation}
In \cite{AccSegLuSs}, the quantum Markov chain defined by (\ref{phiH(O)}) is defined to be the underlying (Hidden) Markov process associated with the HQMM  $\varphi_{H,O}$.\\
In certain significant cases, the two Quantum Markov Chains (QMCs), denoted as $\varphi_{H}\equiv (\varphi_{H,0}, \mathcal{E}_{H})$ and $\varphi_{H}^{(O)}\equiv (\varphi_{H,0}, \mathcal{E}_{H}^{(O)})$, are equivalent. This equivalence holds particularly when the restriction of the map $\mathcal{E}_{H,O}$ to $\mathcal{M}_{d}$ is equal to the identity map.\\
Let define the map $\mathcal{E}^{(O)}_{H}$ as follows
\begin{equation}\label{Restric_TE}
\mathcal{E}^{(O)}_{H}(a\otimes b) = \mathcal{E}_{H}(\mathcal{E}_{H,O}(a\otimes \id_{d})\otimes b), \quad \forall a, b\in\mathcal{M}_{d}
\end{equation}
\begin{theorem}
In the notation above, the map $\mathcal{E}^{(O)}_{H}$ define a Markov transition expectation (i.e a complelety positive identity preserving map) from $\mathcal{M}_{d}\otimes \mathcal{M}_{d}$ to $\mathcal{M}_{d}$.\\The expression of  $\mathcal{E}^{(O)}_{H}$ is as follow
$$
\mathcal{E}^{(O)}_{H}(a\otimes b)=a \diamond P_{H,O}(\id)\diamond P_{H}(b)
$$
Furthermore, the backward Markov operators corresponding to $\mathcal{E}^{(O)}_{H}$ can be expressed as follows
\begin{equation}\label{E_O_H_underlying}
\mathcal{E}^{(O)}_{H}(a\otimes b)=\sum_{k,m,l,i,j}a_{ij}b_{ml}\sqrt{Q_{ik}Q_{jk}}\sqrt{\Pi_{im}\Pi_{jl}}e_{ij}
\end{equation}
for all $a=(a_{ij})$, $b=(b_{ij})$ $\in \mathcal{M}_d$.
\end{theorem}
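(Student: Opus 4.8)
The plan is to reduce every assertion to the defining identity $\mathcal{E}^{(O)}_{H}(a\otimes b)=\mathcal{E}_{H}(\mathcal{E}_{H,O}(a\otimes \id_{d})\otimes b)$ together with the Schur-product representations $\mathcal{E}_{H}(a\otimes b)=a\diamond P_{H}(b)$ and $\mathcal{E}_{H,O}(a\otimes b)=a\diamond P_{H,O}(b)$ established above, and the fact that the Schur product is commutative and associative.

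First I would establish the Schur-product formula, since it underlies the other two claims. Writing $\Phi(a):=\mathcal{E}_{H,O}(a\otimes\id_{d})=a\diamond P_{H,O}(\id_{d})$, the definition gives $\mathcal{E}^{(O)}_{H}(a\otimes b)=\mathcal{E}_{H}(\Phi(a)\otimes b)=\Phi(a)\diamond P_{H}(b)=a\diamond P_{H,O}(\id_{d})\diamond P_{H}(b)$, which is the second claim. To extract the coordinate formula I would use $P_{H,O}(\id_{d})=\sum_{i,j,k}\sqrt{Q_{ik}Q_{jk}}\,e_{ij}$ (obtained exactly as $P_{H}(\id_{d})$ was computed at the start of the proof of Lemma \ref{lem_jointH}) and $P_{H}(b)=\sum_{i,j,m,l}\sqrt{\Pi_{im}\Pi_{jl}}\,b_{ml}\,e_{ij}$. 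Since $\diamond$ acts entrywise, the $(i,j)$-entry of the result is the product $a_{ij}\big(\sum_{k}\sqrt{Q_{ik}Q_{jk}}\big)\big(\sum_{m,l}\sqrt{\Pi_{im}\Pi_{jl}}b_{ml}\big)$, and collecting terms yields exactly \eqref{E_O_H_underlying}.

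For complete positivity I would argue by composition rather than by a direct Stinespring computation. The embedding $a\mapsto a\otimes\id_{d}$ is completely positive and unital, so $\Phi=\mathcal{E}_{H,O}(\,\cdot\,\otimes\id_{d})$ is completely positive as a composition of the CP map $\mathcal{E}_{H,O}$ (the earlier lemma) with this embedding; hence $\Phi\otimes\mathrm{id}_{\mathcal{M}_{d}}$ is completely positive, and $\mathcal{E}^{(O)}_{H}=\mathcal{E}_{H}\circ(\Phi\otimes\mathrm{id}_{\mathcal{M}_{d}})$ is a composition of CP maps, therefore CP. For identity preservation I would simply note $\Phi(\id_{d})=\mathcal{E}_{H,O}(\id_{d}\otimes\id_{d})=\id_{d}$ and $\mathcal{E}_{H}(\id_{d}\otimes\id_{d})=\id_{d}$, both from the identity-preservation part of the earlier lemma, so that $\mathcal{E}^{(O)}_{H}(\id_{d}\otimes\id_{d})=\mathcal{E}_{H}(\id_{d}\otimes\id_{d})=\id_{d}$.

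The one point deserving care — the main obstacle — is the complete positivity of the partially evaluated map $\Phi$, since freezing the second tensor slot at $\id_{d}$ need not be CP for an arbitrary CP map on a tensor product. The composition argument above settles it cleanly, but it can alternatively be verified directly: $\Phi(a)=a\diamond P_{H,O}(\id_{d})$ is Schur multiplication by $P_{H,O}(\id_{d})$, whose $(i,j)$-entry is $\sum_{k}\sqrt{Q_{ik}}\sqrt{Q_{jk}}=\langle v_{i},v_{j}\rangle$ with $v_{i}=(\sqrt{Q_{ik}})_{k\in D}$; this is a Gram matrix, hence positive semidefinite, and Schur multiplication by a positive semidefinite matrix is completely positive by the Schur product theorem. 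Either route makes the result immediate once the Schur representation is in place.
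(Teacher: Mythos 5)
Your proposal is correct and follows essentially the same route as the paper: the paper likewise asserts complete positivity and unitality by viewing $\mathcal{E}^{(O)}_{H}$ as a composition of CP identity-preserving maps, and then derives the Schur-product and coordinate formulas by exactly the computation you describe. Your only addition is to make explicit why the partial evaluation $a\mapsto\mathcal{E}_{H,O}(a\otimes\id_{d})$ is CP (via the unital CP embedding $a\mapsto a\otimes\id_{d}$, or alternatively the Schur product theorem applied to the Gram matrix $P_{H,O}(\id_{d})$), a point the paper leaves implicit.
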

\begin{proof}
Due to its construction  $\mathcal{E}^{(O)}_{H}$ as composition of completely positive identity preserving maps,  $\mathcal{E}^{(O)}_{H}$ is a completely positive identity preserving map. Furthermore,
\begin{eqnarray*}
\mathcal{E}^{(O)}_{H}(a\otimes b) &=& \mathcal{E}_{H}(\mathcal{E}_{H,O}(a\otimes \id_{d})\otimes b) \\
   &=& \mathcal{E}_{H,O}(a\otimes \id_{d}) \diamond P_H(b) \\
   &=& a\diamond P_{H,O}(\id_d)\diamond P_{H}(b) \\
   &=& \left(\sum_{i,j,k}a_{ij}\sqrt{Q_{ik}Q_{jk}e_{ij}}\right)\diamond \left(\sum_{i,j,m,l}b_{ml}\sqrt{\Pi_{im}\Pi_{jl}}e_{ij}\right)\\
   &=& \sum_{i,j,k,m,l}a_{ij}b_{ml}\sqrt{Q_{ik}Q_{jk}}\sqrt{\Pi_{im}\Pi_{jl}}e_{ij}
\end{eqnarray*}
\end{proof}

\begin{theorem}
In the above notations, the conditional expectation $E_{H,0]}^{(O)}$ associated with $\mathcal{E}^{(O)}_{H}$
through \eqref{CE} has the following expression
\begin{equation}\label{CE_REcurrence}
E_{H,0]}^{(O)}(a_0\otimes \cdots\otimes a_n)
\end{equation}
$$
=a_0\diamond P_{H,O}(\id_d)\diamond P_{H}(a_1\diamond P_{H,O}(\id_d)\diamond P_{H}(\cdots
\diamond P_{H}(a_{n-1}\diamond P_{H,O}(\id_d)\diamond P_H (a_n \diamond P_{H,O}(\id_d)\diamond P_{H}(\id_d)))\cdots)))
$$
\begin{equation}\label{E_0_bi_entangled}
=\sum_{\substack{i,j,h_{n+1}\\
o_1,\cdots,o_{m+1}\\
l_1,k_1,\cdots,l_n,k_n}}a_{0,ij}a_{1,k_1l_1}\cdots a_{n-1,k_{n-1}l_{n-1}}a_{n,k_nl_n}\sqrt{Q_{io_1}Q_{jo_1}}\sqrt{\Pi_{ik_1}\Pi_{jl_1}}\times\left(\sqrt{\prod_{m=1}^{n}Q_{k_mo_{m+1}}Q_{l_mo_{m+1}}}\right)
\end{equation}

$$
\times \left(\sqrt{\prod_{m=1}^{n-1}\Pi_{k_mk_{m+1}}\Pi_{l_ml_{m+1}}}\right)\sqrt{\Pi_{k_nh_{n+1}}\Pi_{l_nh_{n+1}}}e_{ij}
$$
\end{theorem}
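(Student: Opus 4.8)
The plan is to construct $E_{H,0]}^{(O)}$ directly from its defining relation \eqref{CE}, substituting the closed form $\mathcal{E}^{(O)}_{H}(a\otimes b)=a \diamond P_{H,O}(\id_d)\diamond P_{H}(b)$ established in the preceding theorem, and then to expand the resulting nested Schur products into the explicit index sum. The argument splits naturally into the two displayed identities.

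First I would unfold \eqref{CE} from the inside out. The innermost term is $\mathcal{E}^{(O)}_{H}(a_n\otimes \id_d)=a_n \diamond P_{H,O}(\id_d)\diamond P_{H}(\id_d)$. Feeding this into the next application gives $\mathcal{E}^{(O)}_{H}(a_{n-1}\otimes \mathcal{E}^{(O)}_{H}(a_n\otimes \id_d))=a_{n-1}\diamond P_{H,O}(\id_d)\diamond P_{H}\big(a_n\diamond P_{H,O}(\id_d)\diamond P_{H}(\id_d)\big)$, and iterating $n+1$ times yields precisely the nested form \eqref{CE_REcurrence}. This half is a routine substitution, with no subtlety beyond keeping the parentheses aligned.

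The passage to the explicit sum \eqref{E_0_bi_entangled} is the part requiring care. Here I would use the two elementary identities $P_{H,O}(\id_d)=\sum_{i,j}\big(\sum_{o}\sqrt{Q_{io}Q_{jo}}\big)e_{ij}$ and $P_{H}(\id_d)=\sum_{i,j}\big(\sum_{h}\sqrt{\Pi_{ih}\Pi_{jh}}\big)e_{ij}$, both read off from \eqref{Ph} and \eqref{Ph_O} at the diagonal matrix $\id_d$, together with the Schur-product rule $e_{ij}\diamond e_{kl}=\delta_{ik}\delta_{jl}e_{ij}$. At each level the outer Schur product with $P_{H,O}(\id_d)$ attaches a factor $\sum_{o_{m+1}}\sqrt{Q_{k_m o_{m+1}}Q_{l_m o_{m+1}}}$ carrying a fresh emission index $o_{m+1}$, while the operator $P_{H}$ links level $m$ to level $m-1$ through a factor $\sqrt{\Pi_{k_{m-1}k_m}\Pi_{l_{m-1}l_m}}$ and promotes the running matrix unit $e_{k_m l_m}$ to $e_{k_{m-1}l_{m-1}}$. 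Proceeding by induction on the number of levels then produces the product of $Q$-weights, the chain $\prod_{m=1}^{n-1}\sqrt{\Pi_{k_m k_{m+1}}\Pi_{l_m l_{m+1}}}$, and the terminal factor $\sqrt{\Pi_{k_n h_{n+1}}\Pi_{l_n h_{n+1}}}$ coming from the innermost $P_H(\id_d)$, exactly as in \eqref{E_0_bi_entangled}.

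I expect the only genuine obstacle to be the index bookkeeping in this induction: one must track simultaneously the two ``replicated'' chains $(k_m)$ and $(l_m)$ generated by the square roots, and verify that the emission indices collapse to a single index $o_{m+1}$ per level. In fact this computation is the $b_0=\cdots=b_n=\id_d$ specialization of the structure Theorem \ref{th:struct-QHMP}: setting each observable argument to the identity forces $b_{m,o_{m+1},o_{m+1}'}=\delta_{o_{m+1}o_{m+1}'}$, which merges the pair $(o_{m+1},o_{m+1}')$ into the single index appearing in \eqref{E_0_bi_entangled}, and omitting the final application of $\varphi_{H,0}$ leaves the $\mathcal{M}_d$-valued expression. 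One may therefore either reuse the induction already carried out in the proof of Theorem \ref{th:struct-QHMP} verbatim, or specialize its conclusion directly.
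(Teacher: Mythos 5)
Your proposal is correct and, in its final paragraph, lands on exactly the argument the paper uses: the paper's entire proof consists of observing that \eqref{E_0_bi_entangled} is the right-hand side of \eqref{joint-exp-qHMP_entangled_2} with every $b_m$ replaced by $\id_d$ (the state $\varphi_{H,0}$ being stripped off to leave the $\mathcal{M}_d$-valued conditional expectation). The additional direct induction you sketch is a sound but redundant alternative route to the same index sum.
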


\begin{proof} The identity \eqref{E_0_bi_entangled} is obtained replacing in the right hand side of \eqref{joint-exp-qHMP_entangled_2} the $b_m$ by $\id_d$.
\end{proof}
Let  $\varphi_{H,0}$ be a initial state on $\mathcal{M}_{d}$ such that
\begin{equation}\label{intial_state}
\varphi_{H,0}(\cdot)=\hbox{Tr}(W_0 \cdot); \quad W_0 = \sum_{j\in D_H}\pi_j e_{jj}
\end{equation}
where $\pi =(\pi_j)_{j\in D_H }$ the initial distribution of the hidden process.
\begin{theorem}\label{recurrent_theorem}
In the notation above, if $e=(e_{p;ij})_{i,j\in D}$ is a projection in $\mathcal{A}_H$ such that
\begin{equation}\label{e_projection}
  q:= \sum_{i,j}e_{p;ij}^{\perp}< \frac{1}{d}
\end{equation}
then $e$ is $\varphi_H^{(O)}$-recurrent.
\end{theorem}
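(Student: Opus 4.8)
The plan is to invoke the characterisation of $\varphi_H$-recurrence established above, applied to the underlying chain $\varphi_H^{(O)}\equiv(\varphi_{H,0},\mathcal E_H^{(O)})$: the projection $e$ is $\varphi_H^{(O)}$-recurrent precisely when $\varphi_H^{(O)}(j_{H_0}(e))\neq 0$ and $\varphi_H^{(O)}(e\otimes\tau_{e;\infty})=0$. Throughout I will use $\varphi_H^{(O)}=\varphi_{H,0}\circ E_{H,0]}^{(O)}$, the closed form $\mathcal E_H^{(O)}(a\otimes b)=a\diamond P_{H,O}(\id)\diamond P_H(b)$, and the elementary facts that the diagonal entries of both $P_{H,O}(\id)$ and $P_H(\id)$ equal $1$ (they are the row sums $\sum_k Q_{ik}=\sum_k\Pi_{ik}=1$), together with $\varphi_{H,0}(e_{ij})=\pi_i\delta_{ij}$ coming from \eqref{intial_state}.

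For the positivity condition I would compute $\varphi_H^{(O)}(j_{H_0}(e))=\varphi_{H,0}(\mathcal E_H^{(O)}(e\otimes\id_d))=\varphi_{H,0}(e\diamond P_{H,O}(\id)\diamond P_H(\id))$; since the $(j,j)$ entry of $e\diamond P_{H,O}(\id)\diamond P_H(\id)$ is $e_{jj}\cdot 1\cdot 1$, this equals $\sum_j\pi_j e_{jj}$. The key remark is that $q=\sum_{i,j}e^\perp_{p;ij}=\langle\mathbf 1,e^\perp\mathbf 1\rangle=\|e^\perp\mathbf 1\|^2$, where $\mathbf 1=(1,\dots,1)$, so $q<1/d\le 1$. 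If some $e^\perp_{jj}=1$ then $e_j\in\mathrm{range}(e^\perp)$, whence $1=\langle e_j,\mathbf 1\rangle=\langle e^\perp e_j,\mathbf 1\rangle=\langle e_j,e^\perp\mathbf 1\rangle\le\|e^\perp\mathbf 1\|=\sqrt q$, contradicting $q<1$. Hence every $e_{jj}=1-e^\perp_{jj}>0$ and $\sum_j\pi_j e_{jj}\ge\min_j e_{jj}>0$.

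For the second condition I would write $\varphi_H^{(O)}(e\otimes\tau_{e;\infty})=\lim_{N}\varphi_{H,0}\!\big(E_{H,0]}^{(O)}(e\otimes(e^\perp)^{\otimes N})\big)$ and peel off the nested $\mathcal E_H^{(O)}$ exactly as in Lemma \ref{lem_jointH} and Theorem \ref{th:struct-QHMP}. Introducing $T(b):=\mathcal E_H^{(O)}(e^\perp\otimes b)=e^\perp\diamond P_{H,O}(\id)\diamond P_H(b)$, the nested expression collapses to $\varphi_{H,0}(\mathcal E_H^{(O)}(e\otimes T^{N}(\id)))$, and since $\mathcal E_H^{(O)}$ is completely positive one checks that $T$ is completely positive and sub-unital, $T(\id)=e^\perp\diamond P_{H,O}(\id)\diamond P_H(\id)\le\id$. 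Consequently $0\le T^{N+1}(\id)\le T^N(\id)\le\id$ decreases to a fixed point $P_\infty=T(P_\infty)\ge 0$, and a short computation gives $\varphi_H^{(O)}(e\otimes\tau_{e;\infty})=\sum_j\pi_j e_{jj}\langle v_j,P_\infty v_j\rangle$ with $v_j=(\sqrt{\Pi_{jk}})_{k}$. Thus everything reduces to showing $P_\infty=0$, i.e. that $T$ has no nonzero positive fixed point below $\id$.

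This last step is the main obstacle. The operator norm of $T$ is of no help, since $\|T\|=\|T(\id)\|$ can equal $1$ (already in the uniform case $\Pi_{ik}=Q_{ik}=1/d$ one has $T(\id)=e^\perp$); the required decay is a statement about the \emph{spectral radius} of $T$, which the crude positivity bounds do not detect. The quantitative input is exactly $q<1/d$: writing $\hat{\mathbf 1}=\mathbf 1/\sqrt d$ one has $\langle\hat{\mathbf 1},e^\perp\hat{\mathbf 1}\rangle=q/d$, so the hypothesis says that $e$ almost contains the flat direction. In the diagonal/uniform model this is conclusive, for there $T(b)=\langle\hat{\mathbf 1},b\hat{\mathbf 1}\rangle\,e^\perp$, whence $T^{N}(\id)=(q/d)^{N-1}e^\perp\to 0$ with contraction rate $q/d<1$. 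For the general case I would try to upgrade this to a genuine contraction estimate, either by bounding $\|T(\id)\|\le\|P_{H,O}(\id)\diamond P_H(\id)\|\cdot\max_i e^\perp_{ii}$ and controlling the two factors by $d$ and $q$, or, more robustly, by pairing the fixed-point identity $P_\infty=e^\perp\diamond P_{H,O}(\id)\diamond P_H(P_\infty)$ against $\mathbf 1\mathbf 1^{*}$ and using $\|e^\perp\mathbf 1\|^2=q<1/d$ to force $P_\infty=0$. I expect this spectral-radius estimate to be the only delicate point; once $P_\infty=0$ is established, both recurrence conditions hold and $e$ is $\varphi_H^{(O)}$-recurrent.
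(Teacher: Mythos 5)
Your reduction is set up correctly: you invoke the right criterion ($\varphi_H^{(O)}(j_{H_0}(e))\neq 0$ together with $\varphi_H^{(O)}(e\otimes\tau_{e;\infty})=0$, via \eqref{varphi_recurrent_condition}), your verification of the non-degeneracy condition $\sum_j\pi_j e_{jj}>0$ is sound (and is a point the paper silently skips), and your reformulation of the tail as $\varphi_{H,0}(\mathcal{E}_H^{(O)}(e\otimes T^N(\id)))$ with $T(b)=e^\perp\diamond P_{H,O}(\id)\diamond P_H(b)$ is a legitimate repackaging of the iteration. But the proof is not complete: the entire content of the theorem is the vanishing of the limit, and you stop exactly there, acknowledging that you do not know how to show $P_\infty=0$ and offering only candidate strategies (operator norm, spectral radius, pairing the fixed-point identity against $\mathbf{1}\mathbf{1}^*$) without carrying any of them out. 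As you yourself observe, $\|T\|$ can equal $1$, so none of the soft positivity facts you establish suffices; and the pairing argument you sketch only controls the single scalar $\langle\mathbf{1},P_\infty\mathbf{1}\rangle$, which does not by itself force the positive matrix $P_\infty$ to vanish. This is a genuine gap, not a routine detail.

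The missing step is in fact elementary and is where the hypothesis $q<1/d$ enters. The paper works directly with the explicit multi-index expansion \eqref{E_0_bi_entangled} of $E_{H,0]}^{(O)}(e\otimes(e^\perp)^{\otimes n})$ (i.e.\ $a_0=e$, $a_1=\cdots=a_n=e^\perp$), applies the state \eqref{intial_state}, and then bounds every square-root factor crudely: $\sqrt{Q_{ab}Q_{cd}}\le 1$ and $\sqrt{\Pi_{ab}\Pi_{cd}}\le 1$. After this, the sum factorizes over the index blocks: the block $(i,o_1)$ contributes at most $\sum_{i,o_1}e_{p,ii}\le d^2$, and each of the $n$ blocks $(o_{m+1},k_m,l_m)$ contributes at most $d\sum_{k,l}e^\perp_{p,kl}=dq$. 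Hence
\begin{equation*}
\varphi_H^{(O)}\bigl(e\otimes\tau_{e;n;\infty}\bigr)\ \le\ d^2\,(dq)^{n-1}\ \xrightarrow[n\to\infty]{}\ 0
\end{equation*}
because $dq<1$. In your language this is precisely the entrywise contraction estimate for $T$ that you were seeking: each application of $T$ multiplies the sum of the (moduli of the) matrix entries by at most $dq$. If you insert this one estimate into your framework, your argument closes; without it, the theorem is not proved.
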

\begin{proof}
From \eqref{E_0_CE_varphi}, we have that
$$
\varphi_{H}^{(O)}(e\otimes \tau_{e,n,\infty}) =\varphi_{H,0}(E_{H,0]}^{(O)}(e\otimes \tau_{e,n,\infty}))
$$
$$
=\sum_{\substack{
o_{2},\cdots,o_{n+1}, h_{n+1}\\
l_2,k_2,\cdots,l_n,k_n}}\hbox{Tr}\left(W_0 \left(\sum_{i,j,o_1,l_1,k_1}e_{p,ij}\sqrt{Q_{io_1}Q_{jo_1}}\sqrt{\Pi_{ik_1}\Pi_{jl_1}}e_{ij}\right)\right)\times e^{\perp}_{p,k_1l_1}\cdots e^{\perp}_{p,k_{n-1}l_{n-1}}e^{\perp}_{p,k_nl_n}
$$
$$
\times \left(\sqrt{\prod_{m=1}^{n}Q_{k_mo_{m+1}}Q_{l_mo_{m+1}}}\right)\left(\sqrt{\prod_{m=1}^{n-1}\Pi_{k_mk_{m+1}}\Pi_{l_ml_{m+1}}}\right)\sqrt{\Pi_{k_nh_{n+1}}\Pi_{l_nh_{n+1}}}
$$
$$
=\sum_{\substack{
i,o_{1},\cdots,o_{n+1}, h_{n+1}\\
l_1,k_1,\cdots,l_n,k_n}}p_i e_{p,ii}Q_{io_1}\sqrt{\Pi_{ik_1}\Pi_{il_1}}\times e^{\perp}_{p,k_1l_1}\cdots e^{\perp}_{p,k_{n-1}l_{n-1}}e^{\perp}_{p,k_nl_n}
$$
$$
\times \left(\sqrt{\prod_{m=1}^{n}Q_{k_mo_{m+1}}Q_{l_mo_{m+1}}}\right)\left(\sqrt{\prod_{m=1}^{n-1}\Pi_{k_mk_{m+1}}\Pi_{l_ml_{m+1}}}\right)\sqrt{\Pi_{k_nh_{n+1}}\Pi_{l_nh_{n+1}}}
$$
$$
\leq \left(\sum_{i,o_1}e_{p,ii}\right)\left(\sum_{o_2,k_1,l_1}e^{\perp}_{p,k_1l_1}\right)\cdots \left(\sum_{o_{n+1},k_n,l_n}e^{\perp}_{p,k_nl_n}\right)\leq d^2\times (dq)^{n-1}
$$
Thus, $\varphi_{H}^{(O)}(e\otimes \tau_{e,\infty})=0$ and by \eqref{varphi_recurrent_condition} the projection $e$ is $\varphi_H^{(O)}$-recurrent.
\end{proof}
\begin{remark}Given the faithfulness of the initial state $\varphi_{H,0}$, corollary \ref{E_reccurent_varphi} implies that the projection $e$ is also $\mathcal{E}_{H}^{(O)}$-recurrent.
\end{remark}
\section{Diagonal restriction of the underlying Markov process associated to Bi- entangled hidden Markov models}

Within this section, we demonstrate that the restriction of the underlying Markov process associated to Bi- entangled hidden Markov models
on any diagonal algebra is a classical Markov process.This process is defined by the hidden stochastic matrix $\Pi=(\Pi_{ij})_{i,j\in D}$ and the initial distribution $\pi =(\pi_j)_{j\in D}$.\\

In the following, we denote $\mathcal{D}_{e}$ the $e$-diagonal sub- algebra of $\mathcal{M}_d$ defined by
\begin{equation}\label{D_e}
  \mathcal{D}_e:=\{\sum_{h\in D}x_h e_{h,h}: x_h\in \mathbb{C}\}
\end{equation}
The $e$- diagonal sub- algebra of $\mathcal{A}_H$ is defined by
$$
\mathcal{D}_H:=\bigotimes_{\mathbb{N}}\mathcal{D}_e
$$
\begin{remark}
\begin{equation}\label{df-De}
\mathcal{D}_H:=\bigotimes_{\mathbb{N}}\mathcal{D}_e\equiv \bigotimes_{\mathbb{N}}L^{\infty}_{\mathbb{C}}(D)
\end{equation}
where $L^{\infty}_{\mathbb{C}}(D)$ denotes the space of all functions $f:$ $D\rightarrow \mathbb{C}$.\\
Therefore, if $\varphi_H\equiv(\varphi_0,\mathcal{E})$ is any quantum Markov chains on $\mathcal{A}_H$,
for any diagonal algebra $\mathcal{D}_{e}$, the restriction of $\varphi_H$ on $\mathcal{D}_{H}$
defines a unique classical process $H\equiv (H_n)$, with state space $D$, characterized by the joint probabilities
$$
\hbox{Prob}\Bigl(H_0=i_0, H_1=i_1,\cdots ,H_n=i_n\Bigr)=
\varphi_H\bigl(e_{i_0i_0}\otimes e_{i_1i_1}\otimes \cdots \otimes
e_{i_{n}i_{n}}\bigr)=
$$
\begin{equation}\label{(2.1)}
=\varphi_0\left(\mathcal{E}(e_{i_0i_0}\otimes \mathcal{E}(e_{i_1i_1}\otimes
\cdots \otimes \mathcal{E}(e_{i_{n-1}i_{n-1}}\otimes
\mathcal{E}(e_{i_ni_n}\otimes 1))\cdots ))\right)
\end{equation}
for any $n\in\mathbb{N}$, $\{i_h\}_{h=0}^n\subset \{1,\cdots, d\}$.\\
With the identification \eqref{df-De} the restriction of the  embedding $j_{H_n}$,
defined by \eqref{df-H-embeds-cl}, to $\mathcal{D}_{e}$ can be identified to
\begin{equation}\label{(1.6)}
j_{H_n}(f):=f(H_n),\ \ \ \forall\ f\in L^{\infty}_{\mathbb{C}}(D)
\end{equation}
\end{remark}

\begin{lemma}\label{trans-exp-diagonal-restriction}
The  transition expectation  $\mathcal{E}_{H}^{(O)}$ \eqref{Restric_TE} maps the diagonal algebra
$\mathcal{D}_{e}\otimes\mathcal{D}_{e}$ into the diagonal algebra $\mathcal{D}_{e}$.
\end{lemma}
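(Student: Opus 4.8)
The plan is to work directly from the explicit backward-operator formula \eqref{E_O_H_underlying}, namely
$$
\mathcal{E}^{(O)}_{H}(a\otimes b)=\sum_{k,m,l,i,j}a_{ij}b_{ml}\sqrt{Q_{ik}Q_{jk}}\sqrt{\Pi_{im}\Pi_{jl}}\,e_{ij},
$$
and simply substitute generic diagonal arguments. Take $a=\sum_{h}x_h e_{hh}\in\mathcal{D}_e$ and $b=\sum_{h}y_h e_{hh}\in\mathcal{D}_e$, so that the matrix entries are $a_{ij}=x_i\delta_{ij}$ and $b_{ml}=y_m\delta_{ml}$. Feeding these into the sum above kills every term with $i\neq j$ or $m\neq l$, leaving only the contributions indexed by $i=j$ and $m=l$. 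This is the crux: once $i=j$ and $m=l$, the surviving basis elements are exactly $e_{ii}$, so the output is automatically supported on the diagonal.

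Next I would collapse the square roots. Because $\Pi$ and $Q$ are (sub-)stochastic matrices with nonnegative entries, for $i=j$ one has $\sqrt{Q_{ik}Q_{ik}}=Q_{ik}$, and for $i=j,\ m=l$ one has $\sqrt{\Pi_{im}\Pi_{im}}=\Pi_{im}$. The expression then reduces to
$$
\mathcal{E}^{(O)}_{H}(a\otimes b)=\sum_{i,m,k}x_i\,y_m\,Q_{ik}\,\Pi_{im}\,e_{ii}.
$$
I would then invoke stochasticity of the emission matrix $Q$, i.e.\ $\sum_{k}Q_{ik}=1$, to eliminate the $k$-summation, obtaining
$$
\mathcal{E}^{(O)}_{H}(a\otimes b)=\sum_{i}x_i\Big(\sum_{m}\Pi_{im}\,y_m\Big)e_{ii},
$$
which is visibly an element of $\mathcal{D}_e=\{\sum_h c_h e_{hh}\}$. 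By linearity of $\mathcal{E}^{(O)}_{H}$ this establishes that the whole diagonal tensor product $\mathcal{D}_e\otimes\mathcal{D}_e$ is mapped into $\mathcal{D}_e$, proving the claim.

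There is no genuine obstacle here: the only point requiring any care is the selection of surviving indices in the double delta collapse and the harmless identity $\sqrt{t^2}=t$ valid because the stochastic-matrix entries are nonnegative. I would nonetheless flag, as a remark rather than part of the proof, that the computed action $(x_i)\mapsto\big(\sum_m\Pi_{im}y_m\big)x_i$ already exhibits the transition structure governed by $\Pi$; this is precisely the feature the next results in Section~6 will exploit to identify the diagonal restriction with the classical Markov process determined by $\Pi$ and $\pi$.
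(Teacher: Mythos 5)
Your proof is correct and follows essentially the same route as the paper: substitute diagonal arguments into the explicit formula \eqref{E_O_H_underlying}, observe that the Kronecker deltas force $i=j$ and $m=l$ so only $e_{ii}$ terms survive, and collapse the square roots to $Q_{ik}\Pi_{im}$. The only difference is that you additionally sum out $k$ using $\sum_k Q_{ik}=1$, which the paper leaves implicit; this is a harmless simplification.
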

\begin{proof}
Let  $a = \sum_{i\in D}x_ie_{ii},\,  b= \sum_{i\in D}y_ie_{ii} \in\mathcal{D}_{e}$, from \eqref{E_O_H_underlying}, it follows that:
\begin{eqnarray*}
  \mathcal{E}^{(O)}_{H}(a\otimes b) &=& \mathcal{E}_{H}(\mathcal{E}_{H,O}(a\otimes \id_{d})\otimes b) \\
   &=& \sum_{k,l,i}x_{i}y_{l}Q_{ik}\Pi_{il}e_{ii}.
\end{eqnarray*}
\end{proof}
\begin{theorem}
In the notations of theorem \ref{recurrent_theorem}, the restriction of $\varphi_{H}^{(O)}$ to the diagonal algebra
$\mathcal{D}_H $ is characterized by the joint probabilities
\begin{equation}\label{joint-exps-Diagonal-restrict}
\varphi_{H}^{(O)}\left(\prod_{m=0}^{n}j_{H_{m}}(e_{j_mj_m}) \right)=\pi_{j_0} \prod_{m=0}^{n-1}\Pi_{j_m j_{m+1}}
\end{equation}
where $k_0,\dots, k_n\in D$.
\end{theorem}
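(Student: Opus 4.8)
The plan is to reduce the left-hand side of \eqref{joint-exps-Diagonal-restrict} to an iterated application of $\mathcal{E}_H^{(O)}$ on diagonal matrix units, and to exploit the fact that the emission matrix $Q$ is stochastic so that its contribution collapses to $1$ on the diagonal. Since each $e_{j_m j_m}\in\mathcal{D}_e$, Lemma \ref{trans-exp-diagonal-restriction} guarantees that the whole nested expression remains inside the diagonal algebra, and there the action of the transition expectation is given by the closed formula from the proof of that lemma, namely $\mathcal{E}^{(O)}_{H}(a\otimes b)=\sum_{k,l,i}x_i y_l\, Q_{ik}\Pi_{il}\,e_{ii}$ for $a=\sum_i x_i e_{ii}$ and $b=\sum_l y_l e_{ll}$.

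I would first specialize this formula to the two building blocks needed for the iteration. Taking $a=e_{ii}$ (so $x_{i'}=\delta_{i',i}$) and $b=e_{ll}$ (so $y_{l'}=\delta_{l',l}$) gives $\mathcal{E}^{(O)}_{H}(e_{ii}\otimes e_{ll})=\big(\sum_k Q_{ik}\big)\Pi_{il}\,e_{ii}=\Pi_{il}\,e_{ii}$, where the sum over $k$ equals $1$ because $Q$ is stochastic. Taking $a=e_{j_n j_n}$ and $b=\id_d=\sum_l e_{ll}$ gives $\mathcal{E}^{(O)}_{H}(e_{j_n j_n}\otimes \id_d)=\big(\sum_k Q_{j_n k}\big)\big(\sum_l \Pi_{j_n l}\big)e_{j_n j_n}=e_{j_n j_n}$, now using the stochasticity of both $Q$ and $\Pi$; this last identity serves as the base case of the induction.

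Next I would apply the classical joint-probability form \eqref{(2.1)} of the restriction (obtained from the conditional expectation \eqref{CE}), with $\mathcal{E}$ identified as $\mathcal{E}_H^{(O)}$ and $\varphi_0$ as $\varphi_{H,0}$, to write the left-hand side as the fully nested expression $\varphi_{H,0}\big(\mathcal{E}^{(O)}_{H}(e_{j_0 j_0}\otimes \mathcal{E}^{(O)}_{H}(e_{j_1 j_1}\otimes\cdots\otimes \mathcal{E}^{(O)}_{H}(e_{j_n j_n}\otimes \id_d)\cdots))\big)$. Working from the inside out, the innermost bracket is $e_{j_n j_n}$ by the base case; then, using $\mathcal{E}^{(O)}_{H}(e_{j_m j_m}\otimes e_{j_{m+1} j_{m+1}})=\Pi_{j_m j_{m+1}}e_{j_m j_m}$ together with linearity of $\mathcal{E}_H^{(O)}$ in the scalar factors already accumulated, a straightforward induction on the nesting level shows that the $m$-th partial expression equals $\big(\prod_{k=m}^{n-1}\Pi_{j_k j_{k+1}}\big)e_{j_m j_m}$. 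In particular the outermost bracket equals $\big(\prod_{k=0}^{n-1}\Pi_{j_k j_{k+1}}\big)e_{j_0 j_0}$.

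Finally I would apply the initial state \eqref{intial_state}: since $\varphi_{H,0}(e_{j_0 j_0})=\Tr(W_0 e_{j_0 j_0})=\pi_{j_0}$, multiplying by the accumulated product of transition factors yields exactly $\pi_{j_0}\prod_{m=0}^{n-1}\Pi_{j_m j_{m+1}}$, which is \eqref{joint-exps-Diagonal-restrict}. There is no serious obstacle; the only point requiring care is the bookkeeping of the two stochasticity sums in the base case. The conceptual content is precisely that the emission data $Q$ drops out on the diagonal, so that the diagonal restriction of the underlying process coincides with the classical Markov chain governed by $\Pi$ and $\pi$.
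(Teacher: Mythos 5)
Your proof is correct, and it reaches the result by a slightly different organization than the paper. The paper's own proof substitutes the diagonal matrix units directly into the closed-form multi-index expression \eqref{E_0_bi_entangled} for the conditional expectation $E_{H,0]}^{(O)}$ and then collapses the sums over the emission indices $o_1,\dots,o_{n+1}$ and the terminal index $h_{n+1}$ using stochasticity of $Q$ and $\Pi$; it never explicitly invokes Lemma \ref{trans-exp-diagonal-restriction}. You instead work recursively from the inside out with the nested form \eqref{(2.1)}, using the local identity $\mathcal{E}^{(O)}_{H}(e_{ii}\otimes e_{ll})=\Pi_{il}e_{ii}$ (which is exactly the formula appearing in the proof of Lemma \ref{trans-exp-diagonal-restriction}, specialized and summed over $k$) together with the base case $\mathcal{E}^{(O)}_{H}(e_{j_nj_n}\otimes \id_d)=e_{j_nj_n}$, and then induct on the nesting level. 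The mathematical content is identical — in both arguments the emission matrix $Q$ disappears because its rows sum to one, and the terminal $\Pi$-factor disappears for the same reason — but your version avoids the heavy bookkeeping of the global sum and makes transparent at each step that one stays inside the diagonal algebra, whereas the paper's version has the advantage of being a one-line specialization of a formula it has already established. Both hinge on the prior identification of $\varphi_H^{(O)}$ with the quantum Markov chain generated by $(\varphi_{H,0},\mathcal{E}_H^{(O)})$, which you correctly take from \eqref{E_0_CE_varphi} and \eqref{CE}.
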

\begin{proof}
From \eqref{E_0_CE_varphi}, we have that
$$
\varphi_{H}^{(O)}(\prod_{m=0}^{n}j_{H_{m}}(e_{j_mj_m})) =\varphi_{H,0}(E_{H,0]}^{(O)}(\prod_{m=0}^{n}j_{H_{m}}(e_{j_mj_m})))
$$
$$
\begin{aligned}
&=\sum_{o_{2},\cdots,o_{n+1}, h_{n+1}}\hbox{Tr}\left(W_0 \left(\sum_{o_1}Q_{j_0o_1}\sqrt{\Pi_{j_0j_1}\Pi_{j_0j_1}}e_{j_0j_0}\right)\right)\times \left(\prod_{m=1}^{n}Q_{j_mo_{m+1}}\right)\left(\prod_{m=1}^{n-1}\Pi_{j_mj_{m+1}}\right)\Pi_{j_nh_{n+1}}\\
&=\sum_{o_{1},\cdots,o_{n+1}, h_{n+1}}p_{j_0} \Pi_{j_0j_1}\Pi_{j_0j_1}\times \left(\prod_{m=1}^{n}Q_{j_mo_{m+1}}\right)\left(\prod_{m=1}^{n-1}\Pi_{j_mj_{m+1}}\right)\Pi_{j_nh_{n+1}}\\
&=\pi_{j_0} \prod_{m=0}^{n-1}\Pi_{j_m j_{m+1}}
\end{aligned}
$$

\end{proof}
\begin{remark}
It is readily apparent that the joint probabilities \eqref{joint-exps-Diagonal-restrict} gives the joint  probabilities of the  Markov defined by the hidden stochastic matrix $\Pi=(\Pi_{ij})_{i,j\in D}$ and the initial distribution $\pi =(\pi_j)_{j\in D}$.
\end{remark}
\begin{remark}
This shows that the  underlying Markov process associated to Bi- entangled hidden Markov models doesn't belongs to the special class of quantum Markov chains that is
strictly related to classical hidden Markov processes in the sense that: the
restriction of any element in this class to any diagonal (in particular commutative)
sub- algebra produces a hidden Markov processes.
\end{remark}
\begin{remark}
In \cite{AccSegLuSs}, it was shown that any diagonalizable quantum Markov chain belongs to this special class. This fact has been recognized since the very beginning of the quantum Markov chains theory. (see \cite{Ac91-Q-Kalman-filters}).
\end{remark}

\end{document}